\let\mybibitem\bibitem
\renewcommand{\bibitem}[1]{%
\ifstrequal{#1}{ORAN1}{\color{black}\mybibitem{#1}}
{\ifstrequal{#1}{tun2021energy}{\color{black}\mybibitem{#1}}
{\ifstrequal{#1}{yang2020multiUAV}{\color{black}\mybibitem{#1}}
{\ifstrequal{#1}{zhang2020computation}{\color{black}\mybibitem{#1}}
{\ifstrequal{#1}{cooper2017comparative}{\color{black}\mybibitem{#1}}
{\ifstrequal{#1}{IBM}{\color{black}\mybibitem{#1}}
{\ifstrequal{#1}{jovsilo2022joint}{\color{black}\mybibitem{#1}}
{\ifstrequal{#1}{guo2022distributed}{\color{black}\mybibitem{#1}}
{\ifstrequal{#1}{wong2015improving}{\color{black}\mybibitem{#1}}
{\ifstrequal{#1}{Yigitoglu2017foggy}{\color{black}\mybibitem{#1}}
{\ifstrequal{#1}{yang2019multi}{\color{black}\mybibitem{#1}}
{\color{black}\mybibitem{#1}}}%
}}}}}}}}}}
\def\cI{\mathcal{I}}
\def\cJ{\mathcal{J}}
\def\cO{\mathcal{O}}
\def\argmax{\operatornamewithlimits{arg\,max}}
\newtheorem{theorem}{Theorem}%
\newtheorem{lemma}{Lemma}%
\begin{document}

\title{An Online Framework for Ephemeral Edge Computing in the Internet of Things  \vspace{-0mm}}

\author{Gilsoo~Lee,~\IEEEmembership{Member,~IEEE}, Walid~Saad,~\IEEEmembership{Fellow,~IEEE}, Mehdi~Bennis,~\IEEEmembership{Fellow,~IEEE},
Cheonyong~Kim, and Minchae~Jung,~\IEEEmembership{Member,~IEEE}% <-this % stops a space

\thanks{ 
{\hspace{-0.52cm} A preliminary version of this  paper was presented in \cite{lee2018online}.}

{\textcolor{black}{This research was supported by the U.S. National Science Foundation under Grant CNS-1814477
and by the Basic Science Research Program through the National Research Foundation of Korea (NRF) funded by the Ministry of Education (NRF-2021R1C1C1012950).}}

{G. Lee was with Department of Electrical and Computer Engineering, Virginia Tech, Blacksburg, VA 24061 USA (e-mail: gilsoolee@vt.edu).}

{W. Saad is with Wireless@VT, Department of Electrical and Computer Engineering, Virginia Tech, Blacksburg, VA 24061 USA (e-mail: walids@vt.edu).}

{M. Bennis is with Centre for Wireless Communications, University of Oulu, Finland (email: mehdi.bennis@oulu.fi).}

{C. Kim and M. Jung are with the Department of Electronics and Information Engineering, Sejong
University, Seoul (e-mail: \{cykim0807, mcjung\}@sejong.ac.kr).}

\vspace{-0.05cm}
}
}
\markboth{}%'
{Shell \MakeLowercase{\textit{et al.}}: Bare Demo of IEEEtran.cls for Journals}
%\author{\IEEEauthorblockN{Gilsoo~Lee$^{*}$,~Walid~Saad$^{*}$,~Mehdi~Bennis$^{\dag}$,~Cheonyong~Kim$^{\ddagger}$,~Minchae~Jung$^{\ddagger}$}\\
%\IEEEauthorblockA{  $^{*}$ Wireless@VT, Department of Electrical and Computer Engineering, Virginia Tech, Blacksburg, VA, USA, \\
%Emails: \protect\url{{gilsoolee, walids}@vt.edu}. \\
% $^\dag$ Centre for Wireless Communications, University of Oulu, Finland, \\Email: \url{mehdi.bennis@oulu.fi}.\\
%  $^\ddagger$ Department of Electronic Engineering, Soonchunhyang University, South Korea, \\Email: \protect\url{{cykim0807, hosaly}@sch.ac.kr}.\\
%}

\maketitle
\thispagestyle{empty}
\vspace{-5mm}
\begin{abstract} 
In the Internet of Things (IoT) environment, edge computing can be initiated at anytime and anywhere.
However, in an IoT \textcolor{black}{environment}, edge computing sessions are often \emph{ephemeral}, i.e., they last for a short period of time and can often be discontinued once the current application usage is completed or the edge devices leave the system due to factors such as mobility. 
Therefore, in this paper, the  problem of \emph{ephemeral edge computing} in an IoT is studied by considering scenarios in which edge computing operates within a limited time period. 
To this end, a novel online framework is proposed in which a source edge node offloads its computing tasks from sensors within an area  to neighboring edge nodes for distributed task computing, within the limited period of time of an ephemeral edge computing system. 
The online nature of the framework allows the edge nodes to optimize their task allocation and decide on which neighbors to use for task processing, even when the tasks are revealed to the source edge node in an online manner, and the information on future task arrivals is unknown. 
The proposed framework essentially maximizes the number of computed tasks by jointly considering the communication and computation latency. 
To solve the \textcolor{black}{joint optimization}, an online greedy algorithm is proposed and solved by using the primal-dual approach. 
Since the primal problem provides an upper bound of the original dual problem, the competitive ratio of the online approach is analytically derived as a function of the task sizes and the data rates of the edge nodes. 
Simulation results show that the proposed online algorithm can achieve a near-optimal task allocation with an optimality gap that is no higher than $7.1$\% compared to the offline, optimal solution with complete knowledge of all tasks. 
\end{abstract}

\vspace{-0.15cm}
\begin{IEEEkeywords}
\vspace{-0.2cm}
Competitive ratio, edge computing, internet of things (IoT), online optimization, task allocation.
\end{IEEEkeywords}

\IEEEpeerreviewmaketitle
\section{Introduction}

Next-generation wireless networks will bring in new Internet of Things (IoT) services that can potentially transform people's daily lives \cite{saad2019vision,karimzadeh2022common}.
Much of these emerging IoT and 5G \textcolor{black}{(fifth generation of wireless communications)} services require low latency in terms of both communication and computing. 
To deliver low-latency IoT services, one can resort to edge computing \cite{chiang2016fog, chen2019joint} techniques that can use radio and computing resources at \textcolor{black}{a network edge}\footnote{\textcolor{black}{According to the network environment and application scenario, the network’s edge can include various entities such as border routers, access points, base stations, mobile devices, and connected vehicles. In this study, we focused on an edge network consisting of mobile nodes.}}.

In particular, by using local computing resources, edge computing can significantly reduce the distance of data transmission, thus inducing smaller communication latency.
To enable large-scale and distributed edge computing among heterogeneous devices, there is a need to enable edge devices to pool their computing resources by instantaneously forming a local edge network to process the computational tasks received from various user applications \cite{chiang2016fog}. 
Clearly, if properly deployed, edge computing will bring forth key benefits for low-latency IoT services by ensuring that a local edge network is instantaneously deployed by edge devices. 
Therein, fundamental challenges include joint radio and computing resource management and application-oriented edge computing system and architecture design. 

\vspace{-5mm}
\subsection{\textcolor{black}{Related Work}}

\subsubsection{Edge computing in general IoT environments}
Edge computing enables a diverse set of IoT services ranging from real-time IoT applications running on user devices to safety applications operating on connected vehicles \cite{hu2015mobile}.
\textcolor{black}{Recently, a number of edge computing proof of concepts have been implemented for various IoT applications such as network resource management \cite{wong2015improving}, IoT application deployment \cite{Yigitoglu2017foggy}, and multimedia data caching \cite{yang2019multi}.
The work in \cite{IBM} showed how one can deploy, in the real world, edge devices with powerful computing resources and an inherent capability of running computation intensive applications.}
\textcolor{black}{Recent prior works} in \cite{wang2019qos, huang2019deep, huang2017fair, mao2016dynamic, kuang2019partial, wang2017computation, chen2018virtual} studied deployment scenarios and resource allocation problems for standard edge computing in static or low-mobility networks. 
In particular, the work in \cite{wang2019qos} proposed an edge computing platform deployed in network infrastructure nodes such as base stations to provide contents to users while maintaining a required quality-of-service. 
Meanwhile, the authors in  \cite{huang2019deep} studied the problem of joint computational task offloading and radio resource allocation in a wireless powered edge computing system by using deep learning. 
The work in \cite{huang2017fair} introduced a caching scheme so as to maximize fairness for an edge computing environment consisting of heterogeneous devices with different communication and computing resources.  
The authors in \cite{mao2016dynamic} proposed a Lyapunov optimization-based computation offloading algorithm to jointly control transmit power and CPU \textcolor{black}{(Central Processing Unit)}-clock speeds when edge computing devices are powered by energy harvesting techniques. 
The work in \cite{kuang2019partial} studied a partial computational task offloading and radio allocation problems are jointly studied. 
Moreover, in \cite{wang2017computation}, a joint strategy of computational offloading and content caching is proposed to maximize the utilization of each \textcolor{black}{edge node radio} and computing resources when the statistical information on the content request is previously known. 
In \cite{chen2018virtual}, the authors used edge computing for enhancing virtual reality services.
%\textcolor{blue}{Recently, the joint problem of task assignment and radio resource allocation has been studied in \cite{jovsilo2022joint} and \cite{guo2022distributed}. In particular, the work in \cite{jovsilo2022joint} studies the management of radio and computing resources in network slicing. Also, the authors in \cite{guo2022distributed} study task offloading and radio resource allocation to securely offload the tasks having different priorities in cognitive eavesdropping environments.}

\subsubsection{Edge computing with high mobility}
The works in \cite{mozaffari2017mobile,  wang2013energy,   jeong2017mobile, zhou2018uav,   hu2019uav, hu2019joint, tun2021energy, yang2020multiUAV, zhang2020computation, yang2019energy, huang2017distributed, liu2018computation, lee2019performance, kang2018blockchain, huang2017exploring} studied various problems related to edge computing in IoT networks that integrate highly mobile devices such as unmanned aerial vehicles (UAVs) and connected vehicles.  
First, in \cite{mozaffari2017mobile, wang2013energy, jeong2017mobile, zhou2018uav,   hu2019uav, hu2019joint, tun2021energy, yang2020multiUAV, zhang2020computation, yang2019energy}, the authors studied the use of UAVs for wireless and computing scenarios. %
For instance, the authors in \cite{mozaffari2017mobile} proposed a framework that jointly optimizes  UAV placement and uplink power control  so that UAVs can collect edge data from ground sensors. %
In \cite{wang2013energy}, the authors employed UAVs as edge message ferries that collect information in wireless sensor networks and carry the data to the destination. 
In  \cite{ jeong2017mobile, zhou2018uav,  hu2019uav, hu2019joint, tun2021energy, yang2020multiUAV, zhang2020computation, yang2019energy}, the authors proposed various use cases for deploying airborne edge computing using a UAV.  
In \cite{jeong2017mobile}, the authors investigated a UAV-mounted cloudlet in which UAVs equipped with a computing processor offload and compute the tasks offloaded from ground devices. %
The work in \cite{zhou2018uav} studied a UAV-enabled mobile edge computing system in which the users harvest the energy from the signal transmitted by the UAV in downlink, and the harvested energy is used to transmit in uplink. 
The work in \cite{hu2019uav} investigated a UAV-enabled edge computing system in which a UAV offloads computational tasks from users and decides whether to compute the tasks or transmit the tasks to a remote server. 
\textcolor{black}{In \cite{hu2019joint} and \cite{tun2021energy}, the authors proposed a UAV-aided multi-access edge computing (MEC) system in which a UAV acts as an edge server (or cloudlet) providing computation service for the ground devices. On the other hand, in \cite{yang2020multiUAV} and \cite{zhang2020computation}, multiple UAVs are assumed to act as edge computing devices which cooperatively compute tasks offloaded by ground devices. Also, the authors in \cite{yang2019energy} studied the joint problem of user association and computational task allocation in a mobile edge computing system where UAVs act as edge computing devices. Hence, the role of UAVs is changeable and determined depending on the considered network environment. In this paper, we focus on a scenario in which one of UAVs acts as a edge server and the rest of them act as edge computing devices. This scenario implies that the considered UAVs are not as powerful as a high performance computing server which can compute all tasks alone, however, they can compute a few tasks faster than other IoT devices such as sensors.}

%The authors in \cite{yang2019energy} studied the joint problem of user association and computational task allocation in a mobile edge computing system where UAVs act as edge computing devices. 
%\textcolor{blue}{In \cite{hu2019joint}, the authors proposed a UAV-aided mobile edge computing system where a UAV equipped with computing resources acts as an edge node for performing partially offloaded computing tasks from ground users. The work in \cite{tun2021energy} studied a UAV-aided mobile edge computing system in which the energy consumption at IoT devices and the UAVs are jointly minimized while considering the communication and computation latency requirements. The work in \cite{yang2020multiUAV} investigated a multi-UAV-aided mobile-edge computing system where multiple UAVs serve computation offloading services for ground IoT nodes while the computation loads of UAVs are balanced by a deployment mechanism. In \cite{zhang2020computation}, the authors studied a computation efficiency maximization problem in a multi-UAV assisted MEC system for jointly optimizing user-edge association, computation and radio resource allocation, and trajectory scheduling of UAVs based on the partial computation offloading.}

Next, edge computing is investigated in various scenarios incorporating connected vehicles  \cite{huang2017distributed, liu2018computation, lee2019performance, kang2018blockchain, huang2017exploring}. 
The authors in \cite{huang2017distributed} developed a distributed reputation management system in which the edge computing resources are allocated in a way to optimize security. 
The work in \cite{liu2018computation} proposed a low-complexity computation offloading algorithm that minimizes the computing cost at connected vehicles. 
Also, the work in \cite{lee2019performance} proposed the use of edge computing techniques to process the computational tasks required in a blockchain system by using the local computing resources of vehicular nodes. 
The authors in \cite{kang2018blockchain} developed a smart contract deployed on an edge computing system to enable connected vehicles to store and share the data securely. 
In \cite{huang2017exploring}, the authors applied a software-defined networking concept to develop an edge computing architecture in which the control plane protocol is designed to cluster a set of neighboring vehicles and a centralized edge computing server is used to optimize the data transmission path. 

\begin{table}[t!]
\caption{Comparison with related works in edge computing. (\checkmark: considered, -: not considered)}
\scriptsize
\centering
 \begin{tabular}{ p{3cm} p{2.2cm} p{1.6cm} p{2cm} p{2.2cm} p{1.6cm} } 
 \hline
  & Radio resource \newline allocation & Multiple edges & Edge mobility & Computation \newline heterogeneity & Time {constraints} \\
 \hline\hline
 \cite{wang2019qos} & - & - & - & - & - \\ 
 \cite{huang2019deep,mao2016dynamic,kuang2019partial,chen2018virtual} & \checkmark & - & - & - & - \\
 \cite{huang2017fair}  & - & - & - & \checkmark & - \\
 \textcolor{black}{\cite{wong2015improving, Yigitoglu2017foggy, yang2019multi}},\cite{wang2017computation} & \checkmark & \checkmark & - & - & -  \\
 \textcolor{black}{\cite{IBM}} & \textcolor{black}{\checkmark} & \textcolor{black}{\checkmark} & \textcolor{black}{-} & \textcolor{black}{\checkmark} & \textcolor{black}{-} \\
 \cite{mozaffari2017mobile},\cite{jeong2017mobile,zhou2018uav} & \checkmark & - & \checkmark & - & -  \\
 \cite{wang2013energy,hu2019uav} & - & - & \checkmark & - & -  \\
 \cite{yang2019energy} & \checkmark & \checkmark & \checkmark & - & -  \\
 \cite{hu2019joint, tun2021energy} & \checkmark & - & \checkmark & \checkmark & -  \\
 \cite{yang2020multiUAV, zhang2020computation} & \checkmark & \checkmark & \checkmark & \checkmark & -  \\
 \cite{huang2017distributed, liu2018computation, lee2019performance, kang2018blockchain, huang2017exploring} & - & \checkmark & - & - & -  \\
 Our work & \checkmark & \checkmark & \checkmark & \checkmark & \checkmark  \\
 \hline
 \end{tabular}
 \label{table:1}
\end{table}

\subsubsection{Limited time constraints within edge computing}

\textcolor{black}{The aforementioned prior works \cite{wang2019qos, huang2019deep, huang2017fair, mao2016dynamic, kuang2019partial, wang2017computation, chen2018virtual, mozaffari2017mobile,  wang2013energy,  jeong2017mobile, zhou2018uav,  hu2019uav, yang2019energy, hu2019joint, tun2021energy, yang2020multiUAV, zhang2020computation, huang2017distributed, liu2018computation, lee2019performance, kang2018blockchain, huang2017exploring} assume that edge computing operates during a relatively long time period, and they do not consider a constraint on the total edge computing time period. However, in IoT scenarios, edge computing can be initiated and discontinued at any time due to the completion of running an application or the mobility of the edge nodes such as drones and vehicles. To capture such use cases, we propose the concept of \emph{ephemeral edge computing} in which edge computing occurs among IoT devices that have a stringent time constraints within which they can perform edge computing. In Table \ref{table:1}, we provide a comprehensive comparison between our work and the existing works on computation offloading in edge computing.}

Next, we first provide the real-world examples of ephemeral edge computing scenarios and, then, we outline our key contributions in this area.

\vspace{-5mm}
\subsection{Ephemeral Edge Computing}

In real world applications, various edge devices can be used to form a local edge network spontaneously and process computational tasks of different applications. 
One common observation here is that the total time period is limited in real-world IoT examples. 
In particular, the running time of a local edge network can be limited due to mobility of edge devices.
Also, when edge computing is initiated to operate an IoT user's application, the usage time of the application can be finite.
Therefore, we introduce a notion of \emph{ephemeral edge computing} to capture cases in which edge computing occurs in a relatively short time period. 
\textcolor{black}{
Here, we note that there exists a suite of industry products related to edge computing (e.g., from Nokia or Amazon). However, these products are mostly related to infrastructure-based edge computing, and to our knowledge, they have not been yet exploited to deploy a concept such as ephemeral edge computing.
Meanwhile, the emerging O-RAN standard \cite{ORAN1}
will have capabilities to support short-lived computing transactions, however, O-RAN does not provide any ephemeral edge computing solution that can leverage these capabilities, as such solutions are left to the research community, which motivates the timeliness and need for this work.}
As discussed next, the concept of ephemeral edge computing admits many real-world IoT applications in several industrial and civilian areas in which total time period available for the use of edge computing  is constrained. 

\subsubsection{Intelligent transportation systems}

%\begin{figure}
%	\centering
%	\includegraphics[width=0.35\textwidth]{./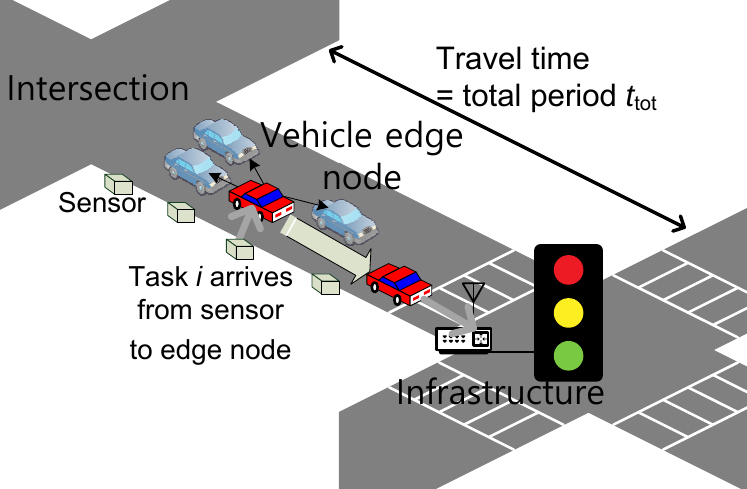}
%	\caption{ \small Ephemeral edge computing framework where the vehicles are edge devices that form a local edge computing network during a limited time period. The vehicles offload their computational tasks from sensors and process in a distributed way while moving on the same direction.}\vspace{-7mm}
%	\label{fig:road}
%\end{figure}

\begin{figure*}[t]\vspace{-2mm}
	\begin{multicols}{2}\vspace{-9mm}
		%\hspace{-4.5mm}
\centering
		\includegraphics[width=0.7\columnwidth]{./figures/road.pdf}\vspace{-2.5mm}
		\par\caption{\small Illustrative example of ephemeral edge computing framework in intelligent transporation systems.}
\label{fig:road}
		%\hspace{-4.5mm}
		\includegraphics[width=1.05\columnwidth]{./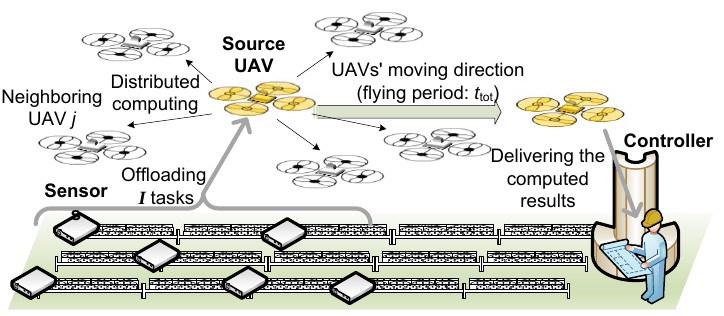}\vspace{-2.5mm}	
		\par\caption{\small Illustrative example of ephemeral edge computing framework in smart factory.}
\label{fig:factory}
		\end{multicols}\vspace{-10mm}
\end{figure*}

As shown in Fig.~\ref{fig:road}, edge computing can be applied to an urban road environment in which a number of sensors monitor the status of the road traffic, vehicle flow, and pedestrian generating a large data volume \cite{huang2017exploring}. 
For example, the generated sensory data from the road environment can be used to detect the current traffic status or to predict safety hazards. 
Moreover, the generated data can also be used to decide the signal light timing and schedule the vehicles at a merging ramp or intersection \cite{huang2017distributed}. 
Therefore, processing the sensory data from a road environment is essential to optimize and control the various physical components of transportation systems. 
In a road environment, since the road sensors have a low computational capability, edge computing on the vehicles can be used to offload the sensory data from environment. 
Then, the data is processed to extract meaningful information such as traffic forecast and safety warnings \cite{darwish2018fog, ferdowsi2019deep}. 
Once the data is processed by the vehicles' on-board computers, the vehicles can transmit the processed information to adjacent road side unit (RSU)  that can then use the processed information to control traffic flows. 

Therefore, intelligent transportation systems provide an important use case for ephemeral edge computing. 
In an urban environment such as the one shown in Fig.~\ref{fig:road}, a set of vehicles move from an intersection to the next intersection while maintaining a formation. 
When edge computing is implemented on the vehicles, it can only be maintained for a limited time period due to mobility. 
Those vehicles can cooperatively process the offloaded data within a limited time period that is the travel time between two intersections. 
Therefore, these vehicles will form an ephemeral edge computing network. 
In this case, the total time period dedicated to edge computing in a vehicular network will be affected by the vehicles' speed and trajectory. 
In particular, the vehicles can share the information on the destination and trajectory to estimate the time period during which a set of vehicles moving the same direction. 
This is just one example of edge computing among many others in the context of transportation systems. 

\subsubsection{Smart factory}

%	\begin{figure}
%	\centering
%	\includegraphics[width=0.55\textwidth]{./figures/factory.pdf}
%	\caption{ \small Ephemeral edge computing framework to offload  computational tasks from  sensors and allocate the offloaded tasks to neighboring edge-computing UAVs when the source edge-computing UAV is moving to the destination. }\vspace{-7mm}
%	\label{fig:factory}
%\end{figure}

	In emerging smart factory scenarios, also known as Industry 4.0 \cite{wang2016implementing}, sensors can detect malfunctions and send  diagnostics signals to actuators in the factory. 
Therefore, factory systems must be optimized to manage the process of sensory data transmission, low-latency computation, and proactive decision making in order to  quickly react to new situations \cite{zuehlke2010smartfactory}. 
Some key challenges for enabling the smart factory vision include effective in-network computing and improvement of wireless connectivity to integrate physical and digital systems, i.e., networking and computation. 
\emph{Computing sensory data in a timely manner} is essential to operate a physical factory system. 
To this end, the concept of ephemeral edge computing can be applied in cyber-physical smart factory systems where UAVs, robots, and drones are deployed and perform key functions such as data storage, computing, control, and transmission \cite{majd2016placement}. 

As shown in Fig.~\ref{fig:factory}, we consider a smart factory in which sensors monitor the status of the manufacturing process and generate a large data volume. 
For example,  the generated sensory data can be used as an input to  machine learning algorithms, e.g., for classification, to predict any abnormality in the manufacturing process. 
Hence, a number of computational tasks must be processed in order to make a decision on how to control the physical systems of the factory based on the information extracted from the data. 
However, due to the low computational capability of the sensors, it is not possible to compute those tasks locally at the sensors. 
Also, sensors are not able to transmit  data over a long distance, and, hence, a flexible relay is necessary \cite{jawhar2014framework}. 
For example, edge-enabled UAVs can be used in a smart factory to gather the tasks from the sensors, compute the tasks, and deliver the computed results to the destination, e.g., a central factory controller that can control the actuators. 
This is a meaningful use case of ephemeral edge computing in that the local edge network can be maintained until the UAVs arrive at the destination. 
Here, the total time period of ephemeral edge computing corresponds to the moving time from the source location to destination.

\subsubsection{IoT sensor systems for end users}
%	
%	\begin{figure}
%		\centering
%		\includegraphics[width=0.35\textwidth]{./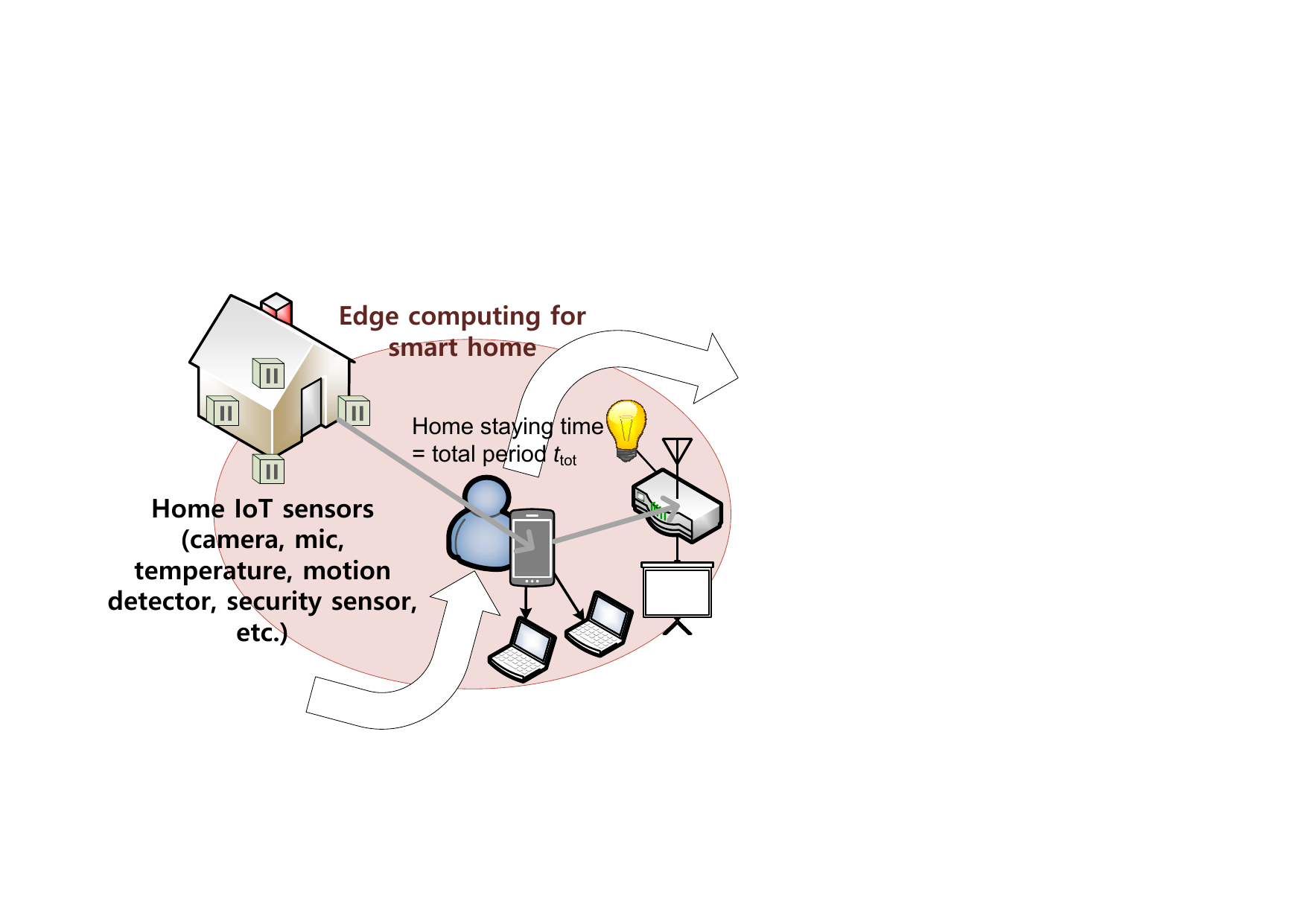}
%		\caption{ \small An IoT environment where the the sensory IoT data is instantly requested and processed by an ephemeral edge computing network to operate real-time applications such as augmented reality and gaming. }\vspace{-7mm}
%		\label{fig:home}
%	\end{figure}
	
%	As shown in Fig.~\ref{fig:home}, consider an IoT environment in which the generated sensory data from the IoT devices is used to control and monitor the status of home appliances, to detect a user's motion and voice \cite{schneider2017augmented}, or to run gaming and augmented reality applications at a museum, sport events, and  sightseeing places  \cite{hu2015mobile}. 	
%
%
	Consider an IoT environment in which the generated sensory data from the IoT devices is used to control and monitor the status of home appliances, to detect a user's motion and voice \cite{schneider2017augmented}, or to run gaming and augmented reality applications at a museum, sport events, and  sightseeing places  \cite{hu2015mobile}. 	
	Those applications require processing and analysis of the real-time IoT data. 	
	In particular, augmented reality and gaming applications must process the data depending on the user's location and orientation. 
	In this case, the time duration within which a user's device is at a stable location in space can be relatively short, and ephemeral edge computing is needed to process the IoT data in a limited time period. 

	As a result, the aforementioned examples in this section show that: a) Ephemeral edge computing admits a diverse set of IoT applications and b) in these applications, the time period dedicated to ephemeral edge computing can be limited depending on the various factors such as mobility and usage patterns of applications. 
	When the total time period of ephemeral edge computing is limited, there is a need for a new approaches to efficiently allocate the radio and computing resources to process a maximum number of computational tasks while considering the time-sensitive nature of the system.

\vspace{-5mm}
\subsection{Contributions}
\textcolor{black}{In all of these existing works on edge computing \cite{wong2015improving, Yigitoglu2017foggy, yang2019multi, IBM, wang2019qos, huang2019deep, huang2017fair, mao2016dynamic, kuang2019partial, wang2017computation, chen2018virtual, mozaffari2017mobile,  wang2013energy, jeong2017mobile, zhou2018uav,  hu2019uav, yang2019energy, hu2019joint, tun2021energy, yang2020multiUAV, zhang2020computation, huang2017distributed, liu2018computation, lee2019performance, kang2018blockchain, huang2017exploring}, it is generally assumed that edge computing is formed and used for a relatively long time period, and, therefore, the total computing time of edge computing is not considered. 
As shown in the real-world examples of ephemeral edge computing, edge computing can be initiated and discontinued at any time, resulting in the finite total time period to use edge computing. 
Therefore, we propose the concept of \emph{ephemeral edge computing} in which the total edge computing time is limited. 
Also, the prior art on edge computing employing both communications and computing \cite{wang2019qos, huang2019deep, huang2017fair, mao2016dynamic, kuang2019partial, wang2017computation, chen2018virtual, mozaffari2017mobile,  wang2013energy, jeong2017mobile, zhou2018uav,  hu2019uav, yang2019energy, hu2019joint, tun2021energy, yang2020multiUAV, zhang2020computation, huang2017distributed, liu2018computation, lee2019performance, kang2018blockchain, huang2017exploring}, generally assumes that information on prospective computing tasks such as data size and arriving order is completely known. 
However, in practice, the information on tasks can be revealed gradually over time since sensory data is randomly generated. 
Hence, when a series of tasks are offloaded to a neighboring edge node, predicting prospective future tasks is often not possible.}
Moreover, instead of offloading the computational tasks to  base stations that are connected the servers, as done in \cite{wang2019qos, huang2019deep, mao2016dynamic}, and \cite{huang2017distributed, kang2018blockchain, lee2019performance, huang2017exploring, liu2018computation}, the tasks can be offloaded to neighboring edge devices by using device-to-device (D2D) communications so as to reduce a communication latency. 
Furthermore, instead of relying on a single edge node for computing, as done in \cite{jeong2017mobile, zhou2018uav,   hu2019uav}, it is beneficial to leverage multiple, neighboring edge nodes for distributed computing of tasks.  
Consequently, unlike the existing literature \cite{wang2019qos, huang2019deep, huang2017fair, mao2016dynamic, kuang2019partial, wang2017computation, chen2018virtual, mozaffari2017mobile,  wang2013energy,  jeong2017mobile, zhou2018uav,  hu2019uav, yang2019energy, hu2019joint, tun2021energy, yang2020multiUAV, zhang2020computation, huang2017distributed, liu2018computation, lee2019performance, kang2018blockchain, huang2017exploring} which assumes full information knowledge on tasks and adopts either single edge node computing models or the models placing edge computing at the base stations, our goal is to design an \emph{online approach} to maximize the number of computed tasks on a network of multiple end-user edge nodes engaged in an ephemeral edge computing network in which there is a strict and  limited total edge-computing time, when the information on tasks is revealed in an online manner. 

The main contribution of this paper is a \emph{novel framework for distributed ephemeral edge computing} that can be operated within a limited time period, as needed in the applications of Figs. {\ref{fig:road}} and {\ref{fig:factory}}.
In particular, our framework allows tasks from sensors to be offloaded to a source edge node, which can subsequently allocate tasks to neighboring edge nodes for computation before the source node finishes edge computing. 
When the exact information on the offloaded tasks is unknown to the source node, it is challenging to decide which neighboring edge node has to compute which task.
\textcolor{black}{If a prior information on the task size is known to the source node, the computation delay at each neighboring edge node can be determined and the source node will allocate the tasks to the edge nodes according to their computational speed and the size of the tasks. However, in practice, the computational tasks arrive dynamically to the source edge node under a real-time process (i.e., online process) and their different data sizes cannot be known in advance.}
Therefore, we formulate an online optimization problem whose goal is to maximize the number of computed tasks when the total time period dedicated to ephemeral edge computing is constrained. 
To solve this problem without any prior information on the future task size, we propose a new online greedy algorithm that is used by the source edge node to make an on-the-fly decision for selecting one of the neighboring node upon the sequential arrival of the computational tasks while a prior information on the task size is unknown. 
Then, we analyze the performance of the proposed algorithm by using the notion of competitive ratio; defined as the ratio between the number of computed tasks achieved by the proposed algorithm and the optimal number of computed tasks that can be achieved by an offline algorithm. 
To this end, we apply the concept of primal-dual approach where the ratio between the dual problem and the original problem constitutes a competitive ratio.  
Therefore, we derive dual problem so as to analyze the worst-case performance of the proposed online algorithm. 
By doing so, the worst-case competitive ratio can be derived as a function of the task sizes and the  communication  and computing performance of the neighboring edge nodes. 
Simulation results show that the proposed online algorithm can maximize the number of computed tasks and achieve a performance that is near-optimal compared to an offline solution that has full information on tasks. 

The rest of this paper is organized as follows. In Section~\ref{sec:systemmodel}, we present the system model. 
Section~\ref{sec:problem} formulates the proposed online problem. 
Section~\ref{sec:algorithm} presents our proposed solution and performance analysis. 
Simulation results are analyzed in Section~\ref{sec:simulation} while conclusions are drawn in Section \ref{sec:conclusion}.

\vspace{-5mm}
\section{System Model and Problem Formulation}\label{sec:systemmodel} 
\vspace{-5mm}
\subsection{System Model}

We consider an ephemeral edge computing system in which sensors generate a set $\mathcal{I}$ of $I$ tasks\footnote{{\textcolor{black}{For consistency, we use the term "task" to indicate both the data generated by a sensor and the computational job that will be used to process data.}}} that are offloaded to a given edge node that we refer to hereinafter as the \emph{source edge node}. 
The source edge node can be seen as a node with mobility such as vehicles and UAVs. Also, the source edge node can be a static node.
\textcolor{black}{While the scenarios that can use ephemeral edge computing are diverse, the role of the source edge node is to offload the computational task data from the sensors and allocate them to neighboring edge nodes. Then, each neighboring edge node directly delivers the computed result to the destination, such as a central controller in a smart factory or an RSU in intelligent transportation systems. Finally, the destination collects the computed tasks from the neighboring edge nodes and makes a decision on how to control the physical systems of the factory based on the collected data.}
When tasks reach the source edge node, they are labeled by their order of arrival. 
Thus, a task that arrives a time instant $i$ is denoted as task $i\in \mathcal{I}.$
Since the source edge node processes the tasks using a first-input-first-output policy, it will sequentially compute its tasks. 
\textcolor{black}{The set $\mathcal{J}$ denotes the set of $J$ edge nodes that are neighbors to the source.}
Each edge node $j \in \mathcal{J}$ is used to compute some allocated task $i$ from the source edge node. 
We also consider that the set of neighboring edge nodes $\cJ$ is initially selected by the source edge node. 
In this regard, the source edge node selects the neighboring edge nodes that are moving towards its same destination. 
\textcolor{black}{Note that the term ``one task'' used here can be seen as a reference to a bundle of small tasks,
making it possible to execute multiple tasks at each edge node. Furthermore, the set $\mathcal{J}$ can include multiple virtual entities of an actual edge node when the number of edge nodes is too small to accommodate all of the tasks. In this case, $|\mathcal{J}|=kJ$ where $k$ is the number of virtual entities and $J$ is the number of actual edge nodes. The virtual entities of an actual edge node would then have to share the decision variables to have the same priority, and the edge nodes would execute their tasks in a round-robin manner.}
\textcolor{black}{The association between the source edge node and neighboring edge nodes can be established based on the clustering algorithm proposed in \cite{cooper2017comparative}, in which the cluster, cluster head, and cluster members correspond, respectively, to the ephemeral edge computing system, source edge node, and neighboring edge nodes. In the considered clustering algorithm, the source edge nodes exchange their link information, such as link states, computation capacity, and mobility, and each source edge node selects the qualified neighboring edge nodes that can maintain the connectivity during $t_{\rm{tot}}$ with no computation in progress, based on the exchanged information.}
\textcolor{black}{The source edge node is assumed to select $J$ neighboring edge nodes that are qualified to join a local edge computing network to process the computational tasks
in terms of residual battery level and computation speed.}
\textcolor{black}{If there are no edge nodes (i.e., $J=0$), a sensor computes its tasks by itself and transmits the results to the destination.
In this paper, we use the edge node essentially for boosting the computation speed rather than for carrying data between sensors and controllers.
Note that, the case in which the node is static, can easily be accommodate into our framework.
For instance, a static source edge node offloads the computational task data from the sensors
and allocates them to neighboring static edge nodes.
Then, each static neighboring edge node calculates the allocated task and directly transmits the result to a destination.
Moreover, mobile edge nodes can be dispatched to any location such as mountains and rural areas where the fixed infrastructure is not readily accessible.}

%\footnotetext{\textcolor{blue}{If a task cannot be processed by an ephemeral edge computing system due to the expiration of time deadline or the dispersal of the edge nodes, the sensor should process the task and deliver the result to the destination by itself. Since conducting a task at a sensor leads to large computation latency and excessive energy consumption of battery-powered sensors, a sensor exploits an edge-computing system whenever it is available.}}
%

\begin{figure}
	\centering
	\includegraphics[width=0.65\textwidth]{./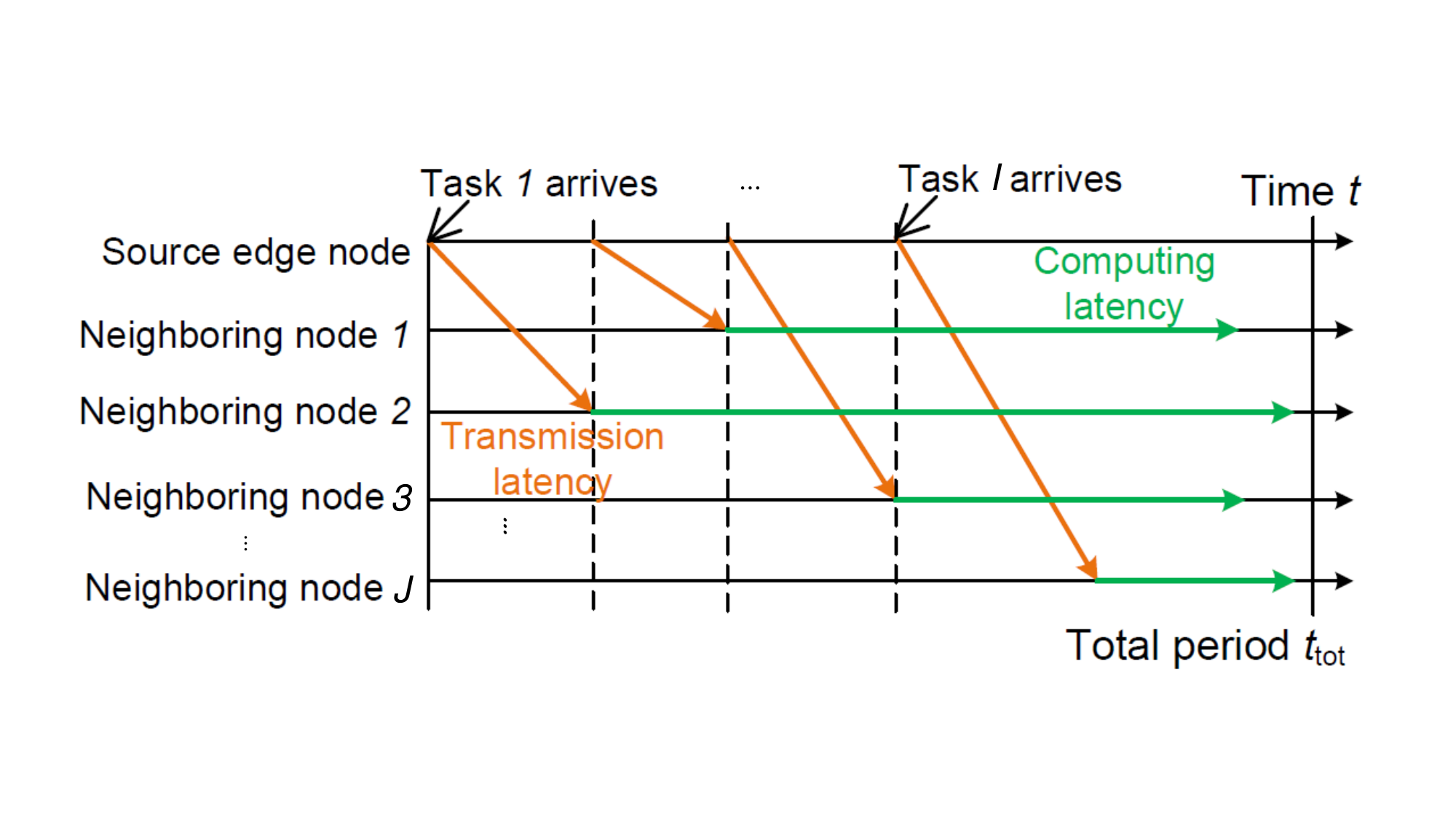}\vspace{-12mm}
	\caption{\small Online  edge computing framework to offload  computational tasks and allocate the offloaded tasks to neighboring edge nodes in total edge computing period $t_{\textrm{tot}}$ within an ephemeral edge computing system. }\vspace{-7mm}
	\label{fig:systemmodel}
\end{figure}

The source edge node allocates the computational tasks to other neighboring edge nodes. Such distributed computing can reduce the overall computational latency when multiple tasks are computed. 
Also, to prevent an excessive energy consumption at neighboring edge nodes, we assume that only one task is allocated to one edge node. 
Therefore,  when neighboring edge node $j$ computes task $i$, the decision variable is set as $y_{ij}=1$. 
The other edge nodes are not used to process the same task $i$, i.e., \textcolor{black}{if $y_{ij}=1$ then $y_{ij'}=0, \forall j' \in \cJ\setminus \{j\}$, $\forall i \in \cI$.} 
Task allocation to neighboring edge node incurs a \emph{transmission latency}. 
\textcolor{black}{The data rate pertaining to the transmission of the data of task $i$ to neighboring edge node $j$ will~be: $r_j = F(B, g_j, P_{t}, \sigma)$,}
%\begin{equation}
%    \textcolor{blue}{r_j = F(B, g_j, P_{t}, \sigma),}
%  %r_j = B \log_2 \left(1+\frac{g_j P_{t}}{\sigma^2}\right),
%\end{equation}
%%
\textcolor{black}{where $F$ is a general transmission rate function, $P_t$ is the transmit power of the source edge node, $B$ is the bandwidth, $\sigma^2$ is the noise power, and $g_j$ is the channel gain between the source edge node and neighboring edge node $j$.} %\textcolor{blue}{$F$ should be replaced by a proper channel capacity function according to the network environment (e.g., Shannon capacity).} %
Therefore, when the data size of task $i$ is $d_i$~bits, the transmission latency becomes ${d_i}/{r_j}$.
\textcolor{black}{Once task $i$ is received by neighboring edge node $j$, it will be processed within a \emph{computational latency}}\footnote{\textcolor{black}{One way to estimate computation latency is to define how many CPU cycles are needed for computing a bit of data. In this paper, the computation latency is defined as $\alpha \cdot d_i/F_j$ where $\alpha$ is the required number of CPU cycles per bit (i.e., computation complexity) and $F_j$ is the CPU speed of edge node $j$ in Hz. For notational simplicity, we introduce the computation power as $f_j = F_j/\alpha$. Therefore, the computation delay can be simply expressed as $d_i/f_j$.}} \textcolor{black}{${d_i}/{f_j}$ where $f_j$ is the computation power of edge node $j$.}

In the proposed ephemeral edge computing system, the time period that the source edge node actively \textcolor{black}{uses} edge computing is given by $t_{\textrm{tot}}$. 
To determine $t_{\textrm{tot}}$, key features of the edge nodes can be considered. 
From the aforementioned cases of ephemeral edge computing, the total time period $t_{\textrm{tot}}$ can be determined as the moving time period of a set of edge computing vehicles on the road or UAVs in a smart factory. 
For example, $t_{\textrm{tot}}$ can depend on the mobility that is characterized by the speed and moving distance of the source edge node. 
$t_{\textrm{tot}}$ could also depend on the different trajectories of the source edge node. 
In the IoT scenarios, the total time period can be given as the running time of an application or the time period where a smart device is staying near the other edge devices to deploy an edge computing network. 
For a given  $t_{\textrm{tot}}$, if a certain number of tasks is processed as shown in Fig.~\ref{fig:systemmodel}, then the tasks' transmission and computation  must be completed within $t_{\textrm{tot}}$. \textcolor{black}{We define the duration between the arrival of the first task and completion of task $i$ as the \emph{completion time} of task $i$.}
As shown in Fig.~\ref{fig:systemmodel}, the tasks are sequentially offloaded from the source edge node to one of neighboring nodes. 
For instance, when the first task, $i=1$, is being allocated, the \textcolor{black}{completion time} including transmission and computation of task 1 will be:%
\begin{equation}\label{const1}
	\sum_{j=1}^J  d_1\left( \frac{1}{r_j} + \frac{1}{f_j}\right)   y_{1j} \leq t_{\textrm{tot}}. 
\end{equation}
Subsequently, since  tasks are  sequentially transmitted to the neighboring edge nodes in the order of index $i$, there will be $i-1$ transmissions before task $i$ is transmitted. 
Therefore, the \textcolor{black}{completion time of} any task $i$, $\forall i \in \mathcal{I} \setminus \{1\}$, %
\begin{equation}\label{const2}
	\sum_{i'=1}^{i-1} \sum_{j=1}^J   d_{i'}\left( \frac{1}{r_j}\right)  y_{i'j} + \sum_{j=1}^J d_i \left( \frac{1}{r_j} + \frac{1}{f_j}\right)  y_{ij} \leq t_{\textrm{tot}},
\end{equation}
where the first term is the sum of the transmission latency of $i-1$ tasks, and the second term is the transmission and computation \textcolor{black}{latency} of task $i$. 
\textcolor{black}{Given that tasks are allocated to and computed by neighboring edge nodes in the order of index $i$, the completion time of task $i$ in (\ref{const2}) includes the summation of the transmission latency for the previous $i-1$ tasks.}
%if task $i$ is computed within period $t_{\textrm{tot}}$, then  $i-1$ tasks will also be computed in the given period. 
Next, we formulate an online task allocation problem to study how tasks are distributed within an edge computing network. 

\vspace{-5mm}
\subsection{Problem Formulation}\label{sec:problem}

Our goal is to allocate tasks to neighboring edge nodes in order to complete the maximum number of tasks during the period $t_{\textrm{tot}}$ needed for the source edge node to reach its destination. 
To compute the tasks, the source edge node must allocate each task to a neighboring edge node that can yield low latency. 
In practice, when the computational tasks arrive dynamically to the source edge node, their different data sizes cannot be known in advance. 
As a result, the source edge node will be unable to know a priori the information on future tasks, and, therefore, optimizing the task distribution process under this uncertainty is very challenging. 
Under such uncertainty, selecting a neighboring edge node that computes a current task must also account for potential arrival of future tasks. 
When the future information is revealed sequentially, the arrival of information can be captured within an online optimization framework. 
In particular, by using online optimization techniques such as those in \cite{buchbinder2009design}, it is possible to make an on-the-fly decision while the future information is given in an online manner. 
To cope with the uncertainty of the future task arrivals while considering the data rate and computing capabilities of given neighboring edge nodes, we will thus propose a rigorous \emph{online optimization framework} that can handle the problem of task allocation under uncertainty. 

First, we formulate the following online task allocation problem whose goal is to maximize the number of computed tasks when the total latency is limited by $t_{\textrm{tot}}$: \vspace{-3mm}
\begin{eqnarray}\label{problemD}
  \textrm{(D)}: \max_{\boldsymbol{y}} && \sum\nolimits_{i=1}^{I} \sum\nolimits_{j=1}^J y_{ij}\\
  \textrm{s.t.} && \eqref{const1}, \eqref{const2},\nonumber\\
  && \sum\nolimits_{i=1}^I y_{ij} \leq 1, \forall j \in \cJ, \label{const3}\\
  && \sum\nolimits_{j=1}^J y_{1j} \leq 1,  \label{const4}\\
  && \sum\nolimits_{j=1}^J (-{\color{black}y_{i-1j}}+y_{ij}) \leq 0, \forall i \in \cI \setminus \{1\}.  \label{const5}
\end{eqnarray}
where $\boldsymbol{y}$ is the vector of decision variables $y_{ij}, \forall i \in \cI, \forall j \in \cJ$. 
Hereinafter, this problem is called the dual problem. 
\textcolor{black}{Constraints \eqref{const1} and \eqref{const2} show that task $i$'s completion time must be smaller than $t_{\textrm{tot}}$ and tasks that cannot satisfy those constraints will not be offloaded.}
\textcolor{black}{\eqref{const3} implies that each neighboring edge node can compute at most one task to prevent an excessive energy consumption at any given edge node.}
In constraint \eqref{const4}, the first task is allocated to one of the neighboring edge nodes. 
Constraint \eqref{const5} implies that task $i$ can be allocated to a neighboring edge node if the task allocation of task $i-1$ is successful, i.e., $\sum_{j=1}^J {\color{black}y_{i-1j}} =1$. Otherwise, if $\sum_{j=1}^J {\color{black}y_{i-1j}} =0$, then, task $i$ cannot be allocated to any edge node, and $\sum_{j=1}^J y_{ij} = 0$. 
Due to \eqref{const4} and \eqref{const5}, we have $\sum_{j=1}^J y_{ij} \leq 1, \forall i \in \cI$, and, thus, each task is allocated to only one of neighboring edge nodes.
\textcolor{black}{Given that the demand for mobile device has been growing
exponentially in recent years, mainly driven by various emerging IoT applications,
we assume that $J \ge I$ and all tasks can be completed during a limited time $t_{\rm{tot}}$ using the edge computing network, and thus, there exist some feasible solutions that satisfy all the constraints in problem (D).}
%\textcolor{blue}{Given that the demand for mobile device has been growing exponentially in recent years, mainly driven by various emerging IoT applications, we assume that $J \ge I$.
%Note that the proposed framework can be repeatedly run so that the sensors can offload their tasks over time to the edge nodes if $J < I$.
%Instead of this delay-tolerant application, however, we focus on a real-time application where all tasks can be completed during a limited time $t_{\rm{tot}}$ using the edge computing network.
%In addition, the assumption of $J \ge I$ makes problem (D) to have different feasible solutions so that the performance of the proposed online algorithm can be measured.}
%, and thus, there exist some feasible solutions that satisfy all the constraints in problem (D).

Note that problem (D) is an \emph{online optimization problem}  and is challenging to solve using conventional offline approaches.  
This is because the value of $d_i, \forall i$, is sequentially revealed. 
When the tasks that  different sensors send to the source edge node  have a random size, the arrival sequence of $d_i$ is assumed to be unpredictable and unknown. 
At the moment when $d_i$ is disclosed, the source edge node knows only the current and past tasks.
However, the source edge node must make an \emph{instant and irrevocable online decision} on which neighboring edge node will compute task $i$. 
Under such uncertainty on $d_i$, allocating tasks to existing neighboring edge nodes must also account for potential arrival of new tasks. 
In fact, even if a given task allocation can compute an existing task successfully, it may have a detrimental effect on the allocation of  incoming tasks. 
In particular, if an edge node having a high data rate and high computational speed is already assigned to compute a previous task, it may not be possible to compute a future task having a large size. 
Therefore, it is challenging to optimize the task allocation  between incoming tasks and neighboring edge nodes.

In an online setting, the ad-auction problem in \cite{buchbinder2009design} shows a generalized structure of  an online linear programming problem and its algorithmic solution. 
We observe that the ad-auction problem and our problem have a key difference in the dependency of the constraints. 
In particular, the ad-auction problem includes the independent constraints about the maximum allocation size for each buyer that corresponds to the edge node in our problem. 
However, in our problem, the constraints about the maximum allocation size of edge nodes are dependent on each other. 
For instance, in \eqref{const1} and \eqref{const2}, the sum of the transmission latency of the previous tasks and the processing latency of the current \textcolor{black}{task} should be less than $t_{\textrm{tot}}$. 
The total time period is a function of the task allocation decisions of all edge nodes while each edge node has an independent task allocation size. 
Therefore, if the given budget of total time period is previously spent to offload and compute previous tasks, the source node cannot offload a new task to a neighboring node that is still available to accept a task. 
Additionally, our problem assumes that the arriving tasks are sequentially allocated to the neighboring node. 
For instance, the current task cannot be allocated to any node, if the previous task is not allocated due to constraints \eqref{const4} and \eqref{const5}. 
Due to the aforementioned differences, we need to develop a novel online task allocation strategy to solve  problem  (D). %

\vspace{-5mm}
\section{Proposed Online Task Allocation Framework}\label{sec:algorithm}

Our goal is to determine the vector of decision variables $\boldsymbol{y}$ so that the maximum number of sequentially arriving tasks is successfully computed by our distributed ephemeral edge computing system.  
When task size $d_i$ is unpredictable, the decision is not trivial since the current decision may affect the task allocation of future tasks, and all tasks cannot be computed due to the limited time resource $t_{\textrm{tot}}$. 
In this case, making an on-the-fly online decision, can process a smaller number of tasks than that of offline decision in which the complete information on all tasks is initially known. 
Therefore, the gap between the results achieved by online and offline cases must be minimized. 
To this end, the notion of \emph{competitive ratio} \cite{buchbinder2009design} from competitive analysis can be used to measure the performance of our online algorithm. 
It is an effective metric that compares the ratio between the the objective function's value achieved by an online algorithm and that of the offline optimal solution. 
In particular, \textcolor{black}{the upper bound of the competitive ratio} can be defined as a constant $\gamma$ such that  \vspace{-5mm}
\begin{eqnarray}\label{defcr}
 1\leq  \frac{\textrm{D}_\textrm{IP,OPT}}{\textrm{D}_{\textrm{IP}}} \leq \gamma, \vspace{-3mm}
\end{eqnarray}
where $\textrm{D}_\textrm{IP,OPT}$ denotes the offline optimal solution (OPT) of problem (D) in the form of integer programming (IP), i.e., the maximum number of computed tasks with the integer solution of $y_{ij}$. 
We will measure the performance of our proposed algorithm by observing the upper bound value defined by $\gamma$. 

To find the upper bound of problem (D), we use the structure of the primal and dual approach \cite{buchbinder2009design}. 
To this end, the optimization variables $y_{ij}$ are relaxed to be linear, i.e., $y_{ij}\in[0,1]$. 
By using the duality of linear programming, problem (D) can be rewritten as: 
\begin{eqnarray}\label{problemP}
\textrm{(P)}:	\min_{\boldsymbol{x}, \boldsymbol{z}, u_1}\hspace{-4mm}&& \sum\nolimits_{i=1}^I t_{\textrm{tot}} x_i + \sum\nolimits_{j=1}^J z_j + u_1,\\
	\textrm{s.t.}&& \hspace{-3mm}
			 \left( \frac{1}{r_j} + \frac{1}{f_j} \right) d_i x_i + \left(\frac{d_i}{r_j} \right)  \sum_{i'=i+1}^I x_{i'} + z_j  + u_i - u_{i+1} \geq 1, \forall i \in \cI \setminus \{I\}, \forall \! j \! \in \cJ, \label{const6}\\
	 &&	\hspace{-3mm} \left( \frac{1}{r_j} + \frac{1}{f_j} \right) d_I x_I + z_j + u_I \geq 1, \forall j \in \cJ, \label{const7}\\
	 && x_i \geq 0, z_j \geq 0, u_i \geq 0,
\end{eqnarray} 
where $\boldsymbol{x}$ and $\boldsymbol{z}$ are vectors with elements $x_{i}, \forall i \in \cI,$ and $z_{j}, \forall j \in \cJ$, respectively. 
This problem is called the \emph{primal problem}. 
In problem \eqref{problemP}, $x_1$, $x_{i\geq 2}$, $z_j$, $u_1$ and $u_{i\geq 2}$ are the dual variables associated, respectively, with constraints \eqref{const1}, \eqref{const2}, \eqref{const3}, \eqref{const4}, and \eqref{const5} in  problem (D). 

The values of \eqref{problemD} and \eqref{problemP} are denoted by \textcolor{black}{$\textrm{D}_{\textrm{IP}}$ and $\textrm{P}_{\textrm{LP}}$}, respectively. 
With $\textrm{D}_{\textrm{IP}}$ and $\textrm{P}_{\textrm{LP}}$, a \emph{competitive ratio} in \eqref{defcr} is derived. 
From the dual and primal problem formulation, it can be shown that $	\textrm{D}_{\textrm{IP}} \leq \textrm{D}_{\textrm{LP}} \leq \textrm{D}_{\textrm{LP,OPT}} \leq \textrm{P}_{\textrm{LP,OPT}} \leq \textrm{P}_{\textrm{LP}}.$
The first inequality is due to the fact that a linear relaxation allows problem (D), which is in the form of linear programming (LP), to have a higher value. 
The second inequality indicates that the offline optimal solution always achieves a value higher than or equal to the online solution of problem (D).  
The third inequality captures the slackness of the primal and dual problems. 
\textcolor{black}{In the fourth inequality, the offline optimal solution of problem (P), i.e., $\textrm{P}_{\textrm{LP,OPT}}$ is smaller than or equal to any online solution of problem (P), i.e., $\textrm{P}_{\textrm{LP}}$.}
%In the fourth inequality, the offline optimal solution of problem (P) is smaller than or equal to any online solution.   
Also, we have $	\textrm{D}_{\textrm{IP}} \leq \textrm{D}_{\textrm{IP,OPT}} \leq \textrm{D}_{\textrm{LP,OPT}}.$
The first inequality follows from the optimality gap between the online and offline solutions when $y_{ij}$ is an integer.  
The second inequality shows that linear relaxation of $y_{ij}$ allows us to have a higher value in problem (D). 
Thus,  the ratio in \eqref{defcr} becomes: $\frac{\textrm{D}_{\textrm{IP,OPT}}}{\textrm{D}_{\textrm{IP}}} \leq \frac{\textrm{P}_{\textrm{LP}}}{\textrm{D}_{\textrm{IP}}}$,
%\begin{eqnarray}\label{defcr2}
%\frac{\textrm{D}_{\textrm{IP,OPT}}}{\textrm{D}_{\textrm{IP}}} \leq \frac{\textrm{P}_{\textrm{LP}}}{\textrm{D}_{\textrm{IP}}},
%\end{eqnarray}
%
where $\textrm{P}_{\textrm{LP}}/ \textrm{D}_{\textrm{IP}}$ corresponds to $\gamma$ in \eqref{defcr}. 
Therefore, $\textrm{P}_{\textrm{LP}}/ \textrm{D}_{\textrm{IP}}$ becomes \textcolor{black}{the upper bound of the competitive ratio}.

\begin{algorithm}[t]\smallskip
\caption{Online Task Allocation Algorithm}
\begin{algorithmic}[1]\label{algorithm}
\scriptsize
\item[1 :] \hspace{0.0cm}  Initialize $y_{ij}=x_i=z_j=u_i=0, \forall i, j$. %
\item[2 :] \hspace{0.0cm}  {\bf for} $i \in \cI$ 
\item[3 :] \hspace{0.3cm}		Task $i$ arrives at source node. 
\item[4 :] \hspace{0.3cm}		Select edge node by using \eqref{selection}. %
\item[5 :] \hspace{0.3cm}		{\bf if} \eqref{const1} and \eqref{const2} are satisfied, and $\sum_j y_{i-1j} =1,$
\item[6 :] \hspace{0.6cm}			$y_{ij} \leftarrow 1$.
\item[7 :] \hspace{0.6cm}			Allocate task $i$ to edge node $j_i^*$ defined in \eqref{selection}. 
\item[8 :] \hspace{0.6cm}			Update $z_j$,  $x_i$, and $u_i$, respectively, by using \eqref{updatez}, \eqref{updatex}, and \eqref{updateu}. 
\item[9 :] \hspace{0.3cm}		{\bf otherwise},
\item[10:] \hspace{0.6cm}			$y_{ij} \leftarrow 0$.
\item[11:] \hspace{0.6cm}			Set $\Delta u_i=1$ and update $u_{i'}, \forall i' \leq i$%
\item[12:] \hspace{0.3cm}		{\bf end if} 
\item[13:] \hspace{0.0cm}  {\bf end for} 
\end{algorithmic}
\end{algorithm}

\vspace{-5mm}
\subsection{Online Greedy Algorithm}
To find the ratio $\textrm{P}_{\textrm{LP}}/ \textrm{D}_{\textrm{IP}}$, we develop a  new online greedy algorithm (Algorithm~\ref{algorithm}) specifically designed to solve problems (D) and (P), based on a general online optimization framework using the primal and dual approach of \cite{buchbinder2009design}. %
In Algorithm~\ref{algorithm}, the decision variables $y_{ij}$, $x_i$, $z_j$, and $u_i$ are updated while observing the new value of $d_i$. 
In particular, when task $i$ arrives to the source edge node, the original dual problem is solved by determining the value of $y_{ij}$.  
Also, other dual variables $x_i$, $z_j$, and $u_i$ are updated in order to find the performance bound of the proposed online algorithm. 
At the initial step of Algorithm~\ref{algorithm}, all variables are set to 0. 
The algorithm selects which edge node should compute task $i$. 
Since it is beneficial to offload task $i$ from the source node to the neighbor with a high data rate and computing speed, this decision rule can be designed to select an edge node with the shortest communication and computing latency to process the task $i$. 
To this end, edge node $j^*$ is selected by following the decision rule:
\begin{equation}\label{selection}
	j^* = \argmax_{\forall j}  \frac{(1-z_j)^\alpha}{\left(\frac{1}{r_j} + \frac{1}{f_j}\right) d_i},
\end{equation}
where $\alpha \geq 1$ is a constant used to guarantee that at least one of the tasks can be fairly allocated among the neighboring edge nodes. 
\textcolor{black}{The detailed derivation of the decision rule in \eqref{selection} is presented in Appendix A.}
Since $z_j$ is initially zero, the decision rule in \eqref{selection} only considers the latency required to process task $i$. 
In Algorithm~\ref{algorithm}, if a neighboring node $j$ accepts a task, the value of $z_j$ is updated to become positive. 
By doing so, $1-z_j$ is reduced, and, hence, another neighboring node can be selected when the next task arrives. 
However, if the neighboring node $j$ results in ${(1-z_j)^\alpha}/{\left(\frac{1}{r_j} + \frac{1}{f_j}\right) d_i} \geq {(1-z_{j'})^\alpha}/{\left(\frac{1}{r_{j'}} + \frac{1}{f_{j'}}\right) d_i}, \forall j' \in \cJ \setminus \{j\}$, the same node $j$ can be selected again. 
This can violates constraint \eqref{const3} that restricts each neighbor to accept one task. 
Therefore, a large value of $\alpha$ can be used to make $(1-z_j)^\alpha$ close to zero. 
Then, at the arrival of a new task, a different neighboring node is selected as $j^*$ by using decision rule \eqref{selection}. 

After a neighboring node $j^*$ is selected for task $i$, if the time budget is still available for the current task $i$ from constraints \eqref{const1} and \eqref{const2}, neighbor node $j^*$  finally receives task $i$ from the source node and performs processing. 
At this moment, the dual and primal variables are updated in Algorithm~\ref{algorithm}. 
The algorithm sets $y_{ij^*}=1$ showing that task  $i$ is allocated to edge node $j^*$. 
Next, the value of $z_{j^*}$ \textcolor{black}{must} be updated since $z_{j^*}$ is the primal variable associated with the dual problem's constraint \eqref{const3} with $j={j^*}$. 
When a neighboring node initially does not have any accepted task, all $z_j, \forall j \in \cJ$ are set to zero. 
However, if a neighboring node $j$ accepts a task $i$, $z_j$ will be updated as follows:
\begin{equation}\label{updatez}
	z_j = z_j \left(1+\left(\frac{1}{r_j} + \frac{1}{f_j}\right) \frac{d_i}{ t_{\textrm{tot}} }  \right) 
	+\left(\frac{1}{r_j} + \frac{1}{f_j}\right)  \frac{d_i}{t_{\textrm{tot}}}  \left(\frac{1}{c-1}\right),
\end{equation}
where $c>1$ is a positive constant that will be defined later. 
Also, the total time period $t_{\textrm{tot}}$ is assumed to be enough to process at least one task, and, thus, $\left(\frac{1}{r_j} + \frac{1}{f_j}\right) \frac{d_i}{ t_{\textrm{tot}} } < 1$. 
Meanwhile, the update of $x_i$  must satisfy constraints \eqref{const6} and \eqref{const7}. 
The value of $x_i$ is updated by using the rule:\vspace{-3mm}
\begin{equation}\label{updatex}
	x_i =  \frac{(1-z_j)^\alpha}{\left({1}/{r_j} + {1}/{f_j}\right) d_i}.
\end{equation}
Moreover, the values of $u_{i'}, \forall i' \leq i,$ is updated as follows: \vspace{-3mm}
\begin{equation}\label{updateu}
u_{i'} = u_{i'} + \Delta u_i, \forall i' \leq i, \vspace{-3mm}
\end{equation}
where we define, $\forall j' \in \cJ$,\vspace{-3mm}
\begin{eqnarray}
\Delta u_i 
&\triangleq&  \max_{j' \in \cJ \setminus \{j^* \}} \left( 1-  \left(	\left( \frac{1}{r_{j'}} + \frac{1}{f_{j'}} \right)   \frac{(1-z_{j^*})^\alpha}{\left({1}/{r_{j^*}} + {1}/{f_{j^*}}\right) } + z_{j'}  \right) , 0 \right).
\end{eqnarray}
\textcolor{black}{Otherwise, if the edge nodes in $\cJ$ do not satisfy \eqref{const1} and \eqref{const2}, then, the tasks arriving after task $i$ cannot be computed, i.e., $y_{ij}=0$,
and those tasks will not be offloaded.}
In this case, to satisfy constraints \eqref{const6} and \eqref{const7}, Algorithm~\ref{algorithm} updates any $z_j$ that has a value of 0 to 1 if $J\leq I$, or, otherwise, $\Delta u_i$ is set to 1. 
This update is intended to satisfy the constraints \eqref{const6} and \eqref{const7} for all $i\in\cI$ and $j\in\cJ$.  
For the arrival of each task, the proposed algorithm is a one-shot decision making process to find a feasible solution. Therefore, by iterating the proposed algorithm for all arriving tasks during $t_{\textrm{tot}}$, our algorithm converges to a feasible solution of problem (D). 

\vspace{-5mm}
\subsection{Performance Analysis}
For the analysis hereinafter, we assume that $\alpha=1$ for analytical tractability. 
In practice, this assumption implies that the decision rule \eqref{selection} tends to select the neighboring node with a high data rate and computing speed. 
As $\alpha$ increases, the decision rule selects a new neighboring node that has not been used to process any previous task. 
Now, as a first step to derive the competitive ratio of the proposed algorithm, we find the following result. 
\begin{lemma}\label{lemma1}
	The constraints of the primal problem \eqref{const6} and \eqref{const7}  will be satisfied if $z_j$, $x_i$, and $u_i$ are updated by \eqref{updatex}, \eqref{updatez}, and \eqref{updateu}, respectively. 
\end{lemma}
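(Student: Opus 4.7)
The plan is to verify constraints \eqref{const6} and \eqref{const7} by case analysis over the outcome of Algorithm \ref{algorithm} at each iteration $i$, exploiting the structure of the updates \eqref{updatex}, \eqref{updateu}, and \eqref{updatez}. Two preparatory observations streamline the bookkeeping. First, the update \eqref{updatez} can only increase $z_j$ (its multiplier is at least one and the additive term is nonnegative), so the final value of $z_j$ appearing in the primal constraints dominates its value at any earlier moment. Second, because at every iteration $i''$ the increment $\Delta u_{i''}$ is added to $u_{i'}$ for all $i' \leq i''$, telescoping across iterations yields the identity $u_i - u_{i+1} = \Delta u_i$ in the final state (with the convention $u_{I+1} \triangleq 0$ for \eqref{const7}); since $\Delta u_i$ is either a max with zero or is explicitly set to one, it is always nonnegative.

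Fix $i$ and $j$ and consider the two possible outcomes of iteration $i$. When task $i$ is allocated to $j^* = j^*_i$ chosen by \eqref{selection}, the update \eqref{updatex} with $\alpha = 1$ gives $(1/r_{j^*}+1/f_{j^*}) d_i x_i = 1 - z_{j^*}^{\textrm{old}}$, where $z_{j^*}^{\textrm{old}}$ denotes the value of $z_{j^*}$ just before \eqref{updatez} is applied. In the subcase $j = j^*$, monotonicity of $z_{j^*}$ yields $(1/r_{j^*}+1/f_{j^*}) d_i x_i + z_{j^*} \geq 1 - z_{j^*}^{\textrm{old}} + z_{j^*}^{\textrm{old}} = 1$, and the remaining terms in \eqref{const6} are nonnegative. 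In the subcase $j \neq j^*$, the maximality of $j^*$ in \eqref{selection} implies $(1-z_{j^*})(1/r_j+1/f_j)/(1/r_{j^*}+1/f_{j^*}) \geq 1 - z_j$, which rearranges directly into $(1/r_j+1/f_j) d_i x_i + z_j \geq 1$; the nonnegative contribution $u_i - u_{i+1} = \Delta u_i$ only reinforces the inequality.

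When task $i$ is not allocated, the algorithm sets $\Delta u_i = 1$, so the telescoping identity gives $u_i - u_{i+1} = 1$ and \eqref{const6} holds regardless of the other (nonnegative) terms. Constraint \eqref{const7} is handled identically at $i = I$ with the convention $u_{I+1} = 0$. The main obstacle is the careful bookkeeping between the ``current'' and ``final'' snapshots of $z_j$ and $u_i$, and confirming that the selection-rule inequality delivers exactly what is needed in the subcase $j \neq j^*$; I expect a genuinely more delicate argument would be required to extend the lemma to general $\alpha > 1$, because then $(1-z_j)^\alpha + z_j$ can fall strictly below one and the full form of $\Delta u_i$ would need to be invoked rather than the selection rule alone.
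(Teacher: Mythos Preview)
Your proof is correct. It parallels the paper's in the case $j=j^*$ but takes a different route in the subcase $j\neq j^*$: the paper invokes the full definition of $\Delta u_i$, and since the maximum there ranges over all $j'\neq j^*$, plugging in $j'=j$ and cancelling yields the constraint directly. You instead exploit the optimality of $j^*$ in the selection rule \eqref{selection} to obtain $(1/r_j+1/f_j)\,d_i\,x_i \geq 1-z_j$, treating $u_i-u_{i+1}=\Delta u_i\geq 0$ as mere slack. Both arguments are valid for $\alpha=1$; in fact your route implicitly shows that $\Delta u_i=0$ whenever a task is allocated under $\alpha=1$, which is a clean observation the paper does not state. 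As you correctly anticipate, it is the paper's $\Delta u_i$-based argument that survives for general $\alpha>1$, because the argmax inequality then only delivers $(1/r_j+1/f_j)\,d_i\,x_i + z_j \geq (1-z_j)^\alpha + z_j$, which can fall below $1$. Your bookkeeping is also more careful than the paper's: you distinguish the snapshot of $z_j$ used in \eqref{updatex} from its final value and close the gap by monotonicity, and you cover the not-allocated branch explicitly via $\Delta u_i=1$, whereas the paper's proof of the lemma leaves that branch to the surrounding discussion.
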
\vspace{-3mm}
\begin{proof}
	See Appendix~\ref{appendix1}. 
\end{proof}
\noindent The next step of our analysis is to check whether the constraints in  problem (D) is satisfied. 
In particular, since it is observable that the upper bound of the left-hand side of the constraint \eqref{const3} can be greater than one,   \eqref{const3}  is not satisfied for $\alpha=1$, as shown next.  
\begin{lemma}\label{lemma2}
	In  \eqref{const3}, $\sum y_{ij}$ is  violated by at least 2. %
\end{lemma}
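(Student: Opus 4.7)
The plan is to exhibit a worst-case instance in which Algorithm~\ref{algorithm} with $\alpha=1$ assigns two or more tasks to the same neighboring node $j_0$, so that the left-hand side of \eqref{const3} reaches the value $2$ and the bound of $1$ is violated by the amount the lemma claims. I will build this scenario in two stages, tracking the primal variable $z_{j_0}$ across the two relevant iterations and appealing to the structure of the selection rule \eqref{selection}.

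First, I would inspect the decision rule at the arrival of the first task. Because Algorithm~\ref{algorithm} initializes $z_j=0$ for every $j\in\cJ$, rule \eqref{selection} collapses to $j^\star=\argmax_j 1/\bigl((\tfrac{1}{r_j}+\tfrac{1}{f_j})d_1\bigr)$, which deterministically picks the neighbor with the smallest combined latency; call this node $j_0$. Invoking the primal update \eqref{updatez} from $z_{j_0}=0$ gives $z_{j_0}^{\mathrm{new}}=(\tfrac{1}{r_{j_0}}+\tfrac{1}{f_{j_0}})\tfrac{d_1}{t_{\mathrm{tot}}}\cdot\tfrac{1}{c-1}$. Choosing $d_1$ small compared to $t_{\mathrm{tot}}$ (or $c$ large) makes this smaller than any prescribed $\varepsilon>0$, so that $(1-z_{j_0})^{\alpha=1}\geq 1-\varepsilon$ after the first allocation.

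Second, I would examine the arrival of task $i=2$. Evaluating \eqref{selection} again, the ratio for $j_0$ is $(1-z_{j_0})/\bigl((\tfrac{1}{r_{j_0}}+\tfrac{1}{f_{j_0}})d_2\bigr)$ while every other $j'$ still contributes $1/\bigl((\tfrac{1}{r_{j'}}+\tfrac{1}{f_{j'}})d_2\bigr)$. Provided $j_0$'s latency parameter is strictly smaller than every competitor's by more than a factor $(1-\varepsilon)^{-1}$, the maximizer is again $j^\star=j_0$. Crucially, line~5 of Algorithm~\ref{algorithm} only tests the latency budgets \eqref{const1}--\eqref{const2} and the sequential condition $\sum_j y_{1j}=1$; nothing in the pseudocode enforces \eqref{const3}. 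So by picking $d_2$ small enough to preserve the time budget, line~6 sets $y_{2j_0}\leftarrow 1$, yielding $\sum_i y_{i j_0}\geq 2$.

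The main obstacle is arguing that this failure is not a pathological special case but an intrinsic consequence of taking $\alpha=1$. The underlying reason is structural: with $\alpha=1$ the penalty $(1-z_j)^\alpha$ is only linear in $z_j$, while the update \eqref{updatez} can be driven arbitrarily close to $0$ when $d_i/t_{\mathrm{tot}}$ is small, so a single allocation does not suppress the $\argmax$ in \eqref{selection} enough to force a switch and the same $j_0$ is re-selected. Taking $\alpha$ large sharpens $(1-z_j)^\alpha$ toward zero after one allocation and blocks this repetition, which is precisely why $\alpha>1$ is invoked in practice; the lemma is pinning down the worst-case violation that this mild $\alpha=1$ penalty allows.
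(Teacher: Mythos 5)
Your construction is correct, but it proves the lemma by a genuinely different route than the paper. You exhibit an explicit two-task adversarial instance: with all $z_j$ initialized to zero, rule \eqref{selection} picks the lowest-latency node $j_0$ for task $1$; because the update \eqref{updatez} raises $z_{j_0}$ only to $\beta_{1 j_0}/(c-1)$, which can be made arbitrarily small, and because line~5 of Algorithm~\ref{algorithm} never tests \eqref{const3}, a sufficiently dominant $j_0$ is selected again for task $2$, so $\sum_i y_{i j_0}=2$ is actually attained. The paper instead argues in the opposite direction: it maintains the invariant $z_j \geq \frac{1}{c-1}\left(c^{\sum_i y_{ij}}-1\right)$ through the multiplicative update \eqref{updatez}, combines it with the cap $z_j \leq \bar{z}$, and obtains the closed-form bound $\sum_{i=1}^{i'} y_{ij} \leq 1+\log_c \frac{(1+\delta)c}{c^{\beta_{i'j}}}$, whose value at $\delta=\beta_{i'j}=0$ is $2$. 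The two arguments are complementary: yours certifies that the violation is realized (a tightness statement, and arguably the more direct reading of ``violated by at least 2''), whereas the paper's potential-function argument shows the per-node overload can never exceed roughly $2$, which is the quantitative cap the surrounding discussion leans on when motivating a large $\alpha$ and the condition of Lemma~\ref{lemma2a}. If your proof is to carry the same information as the paper's, you would still need that upper-bound half; conversely, the paper's derivation never actually exhibits an instance in which \eqref{const3} fails, so your example usefully supplies the existence half that the paper only asserts.
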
\vspace{-3mm}
\begin{proof}
	See Appendix~\ref{appendix2}. 
\end{proof}
\noindent %
This result implies that more than two tasks can be offloaded to the same neighboring node. 
However, there exists a condition under which constraint \eqref{const3} is satisfied. 
\begin{lemma}\label{lemma2a}
	\eqref{const3} is  satisfied if $d_i > \left( {\left({1}/{r_{j_{i}^*}} + {1}/{f_{j_{i}^*}}\right)^{-1}} - {\left({1}/{r_{j_{I}^*}} + {1}/{f_{j_{I}^*}}\right)^{-1}} \right) t_{\textrm{tot}}(c-1)$ where $j_{i}^*$ is the  node selected to process task $i, \forall i \in \cI$. 
\end{lemma}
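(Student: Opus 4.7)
I will combine the update rule \eqref{updatez} with the argmax property of the selection rule \eqref{selection} to show that once $j_i^*$ has been chosen for task $i$ its score drops below a reference value that keeps it from reappearing as the argmax. Let $a_j := 1/r_j + 1/f_j$; under $\alpha=1$, the rule \eqref{selection} reduces to $\arg\max_j (1-z_j)/a_j$ since the factor $d_i$ is common to every $j$ and drops out of the argmax.

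The first step lower-bounds the updated $z$-value of $j_i^*$. Applying \eqref{updatez} at the instant $j_i^*$ accepts task $i$, the additive contribution alone yields $z_{j_i^*}^{(i)} \geq a_{j_i^*} d_i/(t_{\textrm{tot}}(c-1))$, and since $z$-values are monotone non-decreasing across subsequent iterations,
\begin{equation*}
z_{j_i^*}^{(i'-1)} \;\geq\; \frac{a_{j_i^*}\,d_i}{t_{\textrm{tot}}(c-1)} \quad \forall\, i' > i,
\end{equation*}
so that $(1-z_{j_i^*}^{(i'-1)})/a_{j_i^*} \leq 1/a_{j_i^*} - d_i/(t_{\textrm{tot}}(c-1))$. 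The hypothesis $d_i > (1/a_{j_i^*} - 1/a_{j_I^*})\,t_{\textrm{tot}}(c-1)$ rearranges to $1/a_{j_i^*} - d_i/(t_{\textrm{tot}}(c-1)) < 1/a_{j_I^*}$, and chaining these two inequalities gives the key strict bound $(1-z_{j_i^*}^{(i'-1)})/a_{j_i^*} < 1/a_{j_I^*}$ for every $i' > i$.

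The second step uses the argmax property at task $I$ to block reselection. If $j_I^*$ has not been allocated prior to task $I$, then $z_{j_I^*}^{(I-1)}=0$ and its score at task $I$ equals $1/a_{j_I^*}$, strictly exceeding the score of $j_i^*$ derived above; therefore $j_i^*$ cannot be the argmax at task $I$, so $j_i^* \neq j_I^*$. To lift this pointwise distinctness to the full constraint \eqref{const3}, I would invoke a minimal-counterexample argument: let $i^{\prime}$ be the smallest index at which the selection repeats an earlier assignment. By minimality, the witness $j_{i^{\prime}}^*$ is itself fresh at task $i^{\prime}$, so replaying the two steps with $j_{i^{\prime}}^*$ in place of $j_I^*$ yields the same contradiction.

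\emph{Main obstacle.} The delicate point is the bookkeeping in this extension to intermediate tasks. At an intermediate $i^{\prime} < I$ the fresh witness $j_{i^{\prime}}^*$ may have $1/a_{j_{i^{\prime}}^*}$ strictly larger than $1/a_{j_I^*}$, in which case a strictly tighter bound on $d_i$ than the one the lemma states would be needed to force the contradiction; conversely, the stated hypothesis (expressed only through $j_I^*$) is the weakest of the conceivable witness-based conditions. I would address this by arguing that the monotonicity of the argmax sequence under \eqref{updatez}---once selected, a node's score strictly drops, so each subsequent argmax prefers a still-fresh node---together with the minimality of $i^{\prime}$ and the pigeonhole use of $I \leq J$, implies that $1/a_{j_I^*}$ is the effective limiting reference among the witnesses actually encountered. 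Handling the corner case $i^{\prime}=I$ with $j_i^* = j_I^*$ (where the stated hypothesis collapses to $d_i>0$) by supplying an external fresh $\tilde j$ and exploiting $(1-z_{j_I^*}^{(I-1)})/a_{j_I^*} \geq 1/a_{\tilde j}$ closes the argument. Executing this case analysis cleanly is where the proof demands the most care; the remaining steps are direct algebraic manipulations of the update and selection rules.
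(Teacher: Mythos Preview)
Your approach matches the paper's: bound $z_{j_i^*}$ from below via the additive term in \eqref{updatez}, use the hypothesis to push the score $(1-z_{j_i^*})/a_{j_i^*}$ strictly below $1/a_{j_I^*}$, and conclude that a fresh node then wins the argmax at the next step. The obstacle you flag is handled (implicitly in the paper's three-line induction) by observing that, under the no-repeat inductive hypothesis, fresh nodes are consumed in increasing order of $a_j$, so at every intermediate step $i'\le I$ the best remaining fresh node has score $1/a_{j_{i'}^*}\ge 1/a_{j_I^*}$---hence $1/a_{j_I^*}$ is already the correct uniform reference and no tighter condition on $d_i$ is required.
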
\vspace{-3mm}
\begin{proof}
	After task $i$ is offloaded to node $j_i^*$, Algorithm~\ref{algorithm} updates $z_{j_{i}^*}= \left({1}/{r_{j_{i}^*}} + {1}/{f_{j_{i}^*}}\right) \frac{d_i}{t_{\textrm{tot}} (c-1)}$. 
	Next, when task $i+1$ arrives, the condition above yields the inequality $	\frac{1}{\left(		{1}/{r_{j_{i+1}^*}} + {1}/{f_{j_{i+1}^*}}		\right) d_{i+1}} > 	\frac{1}{\left(		{1}/{r_{j_{I}^*}} + {1}/{f_{j_{I}^*}}		\right) d_{i+1}}
	> 	\frac{(1-z_{j_{i}^*})^\alpha}	{	\left(		{1}/{r_{j_{i}^*} }+ {1}/{f_{j_{i}^*}}				\right) 				d_{i+1}			}, \forall i \in \cI$ with $\alpha =1$. 
	Therefore,  \eqref{selection} is used to select a new node $j_{i+1}^*$ to process task $i+1$. 
	Hence, a different neighboring node is selected for each task. 
\end{proof}
For instance, the condition in Lemma~3 can be satisfied if the value of $d_i$ is decreasing over time. 
In that case, every neighboring node can be used to process different tasks, thus satisfying constraint \eqref{const3}. 
As a last step, we derive the increment rate of the $\Delta P/\Delta D$ when a new task $i$ arrives in an online manner. 
\begin{lemma}\label{lemma3}
	When the dual problem's objective function increases by one, the primal problem's objective function increases by $ \frac{t_{\textrm{tot}}}{\left({1}/{r_j} + {1}/{f_j}\right) d_i} \left( 1 + \frac{1}{c-1} \right)  +  \Delta u_i$ for any given $c>1$. %
\end{lemma}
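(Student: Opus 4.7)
My plan is to argue by direct bookkeeping. When task $i$ is successfully allocated to the chosen neighbor $j^{*}$ in Steps~6--8 of Algorithm~\ref{algorithm}, the dual objective $\sum_{i,j} y_{ij}$ increases by exactly one, since only the component $y_{ij^{*}}$ flips from $0$ to $1$ and constraint~\eqref{const5} together with Step~5 prevents any additional $y_{ij}$ from being set in this iteration. The lemma therefore reduces to computing the primal increment $\Delta P$ of $\sum_{i} t_{\textrm{tot}}\, x_i + \sum_{j} z_j + u_1$ in that same iteration and matching it against the claimed expression.

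I would then separate $\Delta P$ into three contributions and evaluate each from the update rules: (i) the increment of $t_{\textrm{tot}} x_i$ coming from \eqref{updatex} with $\alpha = 1$, starting from the initialization $x_i = 0$; (ii) the increment of $z_{j^{*}}$ from \eqref{updatez}, applied to the current value $z_{j^{*}} \ge 0$; and (iii) the increment of $u_1$ from \eqref{updateu}. For piece (iii), although the rule raises $u_{i'}$ by $\Delta u_i$ for every $i' \le i$, only $u_1$ appears in the primal objective, so the net contribution of the $u$-updates is simply $\Delta u_i$.

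With the shorthand $A := (1/r_{j^{*}} + 1/f_{j^{*}})\, d_i / t_{\textrm{tot}}$ and $B := 1/A$, the three pieces collapse into the compact form $\Delta P = B - z_{j^{*}}(B - A) + A/(c-1) + \Delta u_i$. To reach the formula in the statement I would invoke two observations: the paper's running condition $A < 1$, asserted immediately after \eqref{updatez}, which delivers $A \le 1 \le B$; and the invariant $z_{j^{*}} \ge 0$, which is preserved by \eqref{updatez} because both of its summands are non-negative. Together these give $z_{j^{*}}(B - A) \ge 0$ and $A/(c-1) \le B/(c-1)$, hence $\Delta P \le B\,(1 + 1/(c-1)) + \Delta u_i$, which is exactly the claimed increment once $B$ is spelled out in terms of $t_{\textrm{tot}}$, $r_{j^{*}}$, $f_{j^{*}}$, and $d_i$.

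The main obstacle I anticipate is precisely handling the cross-term $-z_{j^{*}}(B - A) + A/(c-1)$: the variable $z_{j^{*}}$ enters the $x_i$-update with a negative coefficient and the $z_{j^{*}}$-update with a positive one, and the constants $A$ and $B$ play asymmetric roles, so it is not obvious a priori that the expression collapses in the direction of the target bound. The sign analysis hinges critically on the time-budget hypothesis $A < 1$; without it, both $B - A \ge 0$ and $A \le B$ can fail, the inequality can reverse, and the competitive-ratio argument that builds on this lemma would no longer go through.
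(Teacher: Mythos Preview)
Your proposal is correct and follows essentially the same route as the paper's proof: both compute $\Delta P = \Delta z_{j^*} + t_{\textrm{tot}}\,x_i + \Delta u_i$ directly from the update rules \eqref{updatez}--\eqref{updateu} with $\alpha=1$, then invoke the time-budget hypothesis $A=(1/r_{j^*}+1/f_{j^*})\,d_i/t_{\textrm{tot}}<1$ to bound the result by $B(1+1/(c-1))+\Delta u_i$. The only cosmetic difference is that the paper upper-bounds $A(z_{j^*}+1/(c-1))\le B(z_{j^*}+1/(c-1))$ in a single step before combining with $B(1-z_{j^*})$, whereas you expand first and then handle $-z_{j^*}(B-A)\le 0$ and $A/(c-1)\le B/(c-1)$ separately; your explicit remark that only $u_1$ enters the primal objective is a clarification the paper leaves implicit.
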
\vspace{-3mm}
\begin{proof}
	See Appendix~\ref{appendix3}. 
\end{proof}
\noindent

Now, to derive a competitive radio for the proposed algorithm, we will adopt a primal-dual online analysis analogous to the one done in \cite{buchbinder2009design}. 
In Lemma~\ref{lemma1}, it is shown that the primal variable is updated while satisfying the constraints \eqref{const6} and \eqref{const7}. 
Then, we show that the dual constraints from \eqref{const1} to \eqref{const5} are satisfied under the derived condition in Lemma~\ref{lemma2a}. 
Finally, the increment rates of the primal and dual problems are, respectively, derived in Lemma~\ref{lemma3}. 
As a result, %
from Lemmas \ref{lemma1}, \ref{lemma2a}, and \ref{lemma3}, we obtain the following key result: %
\begin{theorem}\label{theorem2}
	\textcolor{black}{The upper bound of the competitive ratio} in Algorithm~\ref{algorithm} is 
	$\cO(1/\min_i \beta_{ij})$ where $\beta_{ij} \triangleq \left(\frac{1}{r_{j}} + \frac{1}{f_{j}}\right) \frac{d_{i}}{t_{\textrm{tot}}}$ if $d_i > \big( {\left({1}/{r_{j_{i}^*}} + {1}/{f_{j_{i}^*}}\right)^{-1}} - {\left({1}/{r_{j_{I}^*}} + {1}/{f_{j_{I}^*}}\right)^{-1}} \big) t_{\textrm{tot}}(c-1)$. 
\end{theorem}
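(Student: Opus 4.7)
My plan is to combine the three preceding lemmas into a single inequality of the form $\textrm{P}_{\textrm{LP}}/\textrm{D}_{\textrm{IP}} \leq \gamma$, and then to identify the dominant term in $\gamma$. The starting point is the chain $\textrm{D}_{\textrm{IP,OPT}} \leq \textrm{D}_{\textrm{LP,OPT}} \leq \textrm{P}_{\textrm{LP,OPT}} \leq \textrm{P}_{\textrm{LP}}$ already established before problem (P). So once I show that Algorithm~\ref{algorithm} maintains a feasible primal solution and a feasible dual solution at every step, it suffices to bound the ratio $\Delta \textrm{P}_{\textrm{LP}}/\Delta \textrm{D}_{\textrm{IP}}$ per iteration and then sum over $i$.

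First I would invoke Lemma~\ref{lemma1} to argue that after each update via \eqref{updatez}, \eqref{updatex}, \eqref{updateu}, the primal constraints \eqref{const6} and \eqref{const7} remain satisfied, so $(x,z,u)$ built by the algorithm is a genuine feasible primal vector; its objective value is exactly $\textrm{P}_{\textrm{LP}}$. Next, under the hypothesis of the theorem, namely $d_i > \bigl({(1/r_{j_i^*}+1/f_{j_i^*})^{-1}} - {(1/r_{j_I^*}+1/f_{j_I^*})^{-1}}\bigr) t_{\textrm{tot}}(c-1)$, I would apply Lemma~\ref{lemma2a} to conclude that \eqref{const3} is not violated; constraints \eqref{const1}, \eqref{const2} are checked explicitly in line~5 of the algorithm, and \eqref{const4}, \eqref{const5} follow directly from the sequential first-input-first-output update rule. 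Therefore $\boldsymbol{y}$ is a feasible integer solution to problem (D).

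With feasibility in hand, the core of the argument is Lemma~\ref{lemma3}: at every step in which the dual objective rises by one, the primal objective rises by at most $\frac{t_{\textrm{tot}}}{(1/r_j + 1/f_j) d_i}\bigl(1 + \tfrac{1}{c-1}\bigr) + \Delta u_i$. Observing that $\beta_{ij} = (1/r_j+1/f_j)\,d_i/t_{\textrm{tot}}$, this per-step ratio becomes $\frac{1}{\beta_{ij}}\bigl(1+\tfrac{1}{c-1}\bigr) + \Delta u_i$. Summing the telescoping inequality $\Delta \textrm{P}_{\textrm{LP}} \leq \gamma\,\Delta \textrm{D}_{\textrm{IP}}$ over all task arrivals, and then taking the worst-case index, I would obtain $\textrm{P}_{\textrm{LP}}/\textrm{D}_{\textrm{IP}} = \mathcal{O}(1/\min_i \beta_{ij})$, since $c>1$ is absorbed into the constant and the residual $\Delta u_i$ contributions can be bounded by the same worst-case $\beta$ term. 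Combining with \eqref{defcr2} then gives the stated competitive ratio.

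The main obstacle I foresee is dealing cleanly with the $\Delta u_i$ contribution. Unlike the first term, which is directly controlled by $\beta_{ij}$ via \eqref{updatex}, the increment $\Delta u_i$ is defined through a maximum over $\cJ\setminus\{j^*\}$ and may be nonzero even when a task is successfully allocated. The key observation I would rely on is that, because $z_{j^*}$ is non-decreasing and bounded by one, and because \eqref{selection} is the $\arg\max$ rule, the expression inside the max in $\Delta u_i$ is itself controlled by ratios of the form $(1/r_{j'}+1/f_{j'})/(1/r_{j^*}+1/f_{j^*})$, which in turn are absorbed into the $\mathcal{O}(1/\min_i \beta_{ij})$ bound. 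Handling the exceptional case where the algorithm must set $\Delta u_i = 1$ because \eqref{const1}--\eqref{const2} fail is easier, since that event terminates further allocation and therefore contributes only a single additive constant to the overall ratio.
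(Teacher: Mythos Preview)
Your outline follows the paper's own proof almost exactly: invoke Lemma~\ref{lemma1} for primal feasibility, use the hypothesis via Lemma~\ref{lemma2a} for dual feasibility of $\boldsymbol{y}$, and then apply Lemma~\ref{lemma3} to bound the per-step ratio $\Delta P/\Delta D$ by $\tfrac{1}{\beta_{ij}}\bigl(1+\tfrac{1}{c-1}\bigr)+\Delta u_i$, taking the worst index to obtain $\mathcal{O}(1/\min_i\beta_{ij})$. That part is fine and matches the paper.

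The gap is in your treatment of the failed-allocation tail. You write that once \eqref{const1}--\eqref{const2} fail, ``that event terminates further allocation and therefore contributes only a single additive constant.'' This is not what happens. Once task $i$ fails, constraint \eqref{const5} forces every subsequent task to fail as well, and for \emph{each} of the remaining $I-D_{\textrm{IP}}$ tasks the algorithm sets $\Delta u_i=1$ and propagates it back to $u_1$ via \eqref{updateu}. Since $u_1$ sits in the primal objective, the primal increases by one at every such step while the dual does not move, so the total tail contribution to $P_{\textrm{LP}}$ is $I-D_{\textrm{IP}}$, not a single constant. The paper accounts for this explicitly by writing $\gamma \le \Delta P/\Delta D + (I-D_{\textrm{IP}})$ and then arguing that the first term grows like $\mathcal{O}(1/\min_i\beta_{ij})$ while the second term shrinks as $D_{\textrm{IP}}$ increases, so the overall bound remains $\mathcal{O}(1/\min_i\beta_{ij})$. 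Your ``single additive constant'' assertion would need to be replaced by this accounting; otherwise the ratio you derive could be off by as much as $I-D_{\textrm{IP}}$.
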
\vspace{-3mm}
\begin{proof}
	Lemma 1 first shows that the constraints of problem (P) are satisfied for all tasks that are assigned to the set of edge nodes. 
	At each iteration, Lemma~3 shows that the increment of $\Delta P/\Delta D$ is at most  \vspace{-3mm}
	\begin{equation}
			\frac{\Delta P}{\Delta D} \leq \frac{1}{\min_i \beta_{ij}} \left(1+ \frac{1}{  (1+\delta)^{\frac{1}{\delta}} -1 }  \right)   + \max_i  \Delta u_i,\label{thm2_1} \vspace{-3mm}
	\end{equation}
	where $\beta_{ij} = \left(\frac{1}{r_{j}} + \frac{1}{f_{j}}\right) \frac{d_{i}}{t_{\textrm{tot}}}$. 
	Also, \eqref{thm2_1} has an upper bound at $\delta = 1$. 
Since $D_{IP}=\sum_{\forall i,j} y_{ij}$,  the future tasks $i > D_{IP}$ cannot be allocated to any neighbor. 
In that case, Algorithm~\ref{algorithm} sets $\Delta u_i=1$. 
Then, all values of $u_{i'}, \forall i' \leq i$ increase by one, resulting in $\Delta D=0$ and $\Delta P=1$. 
Thus, we have $\gamma \leq \frac{\Delta P}{\Delta D} + (I - D_{IP})$. 
We observe that ${\Delta P}/{\Delta D}$ increases with the rate of $\cO({1}/{\min_i \beta_{ij}})$ as $\beta_{ij} \rightarrow 0$. At the same time, $I - D_{IP}$ can decrease with $ D_{IP}$ when the number of processed tasks increases. 
	Hence, the ratio $\gamma$ can be bounded by $\cO({1}/{\min_i \beta_{ij}})$. 
	\end{proof}
\noindent This result characterizes the online performance bound achieved by Algorithm~\ref{algorithm} in which \textcolor{black}{$\gamma$} can decrease as $\min_i \beta_{ij}$ approaches $1$. 
If $\min \left(\frac{1}{r_{j}} + \frac{1}{f_{j}}\right) \frac{d_{i}}{t_{\textrm{tot}}} \approx 1$, we have an environment in which all neighboring edge nodes have similar communication and computing performance, thus resulting in the smallest \textcolor{black}{$\gamma$} close to 1. In such a case, the online and offline performance gap is minimized. 
Also, Algorithm~\ref{algorithm} can be usefully converted into another simple algorithm that updates $\Delta u_i=0$ for all tasks $i\in\cI$ so that problem (P) has a value of $I$, by assuming  $\left(\frac{1}{r_{j}} + \frac{1}{f_{j}}\right) \frac{d_{i}}{t_{\textrm{tot}}}, \forall i, j, $ equals to $1$. 
This algorithm shows that the competitive ratio is inherently upper bounded by $P_{\textrm{LP}}/D_{\textrm{IP}}=I$ in the worst case. 

\begin{figure}
	\centering
	\includegraphics[width=0.45\textwidth]{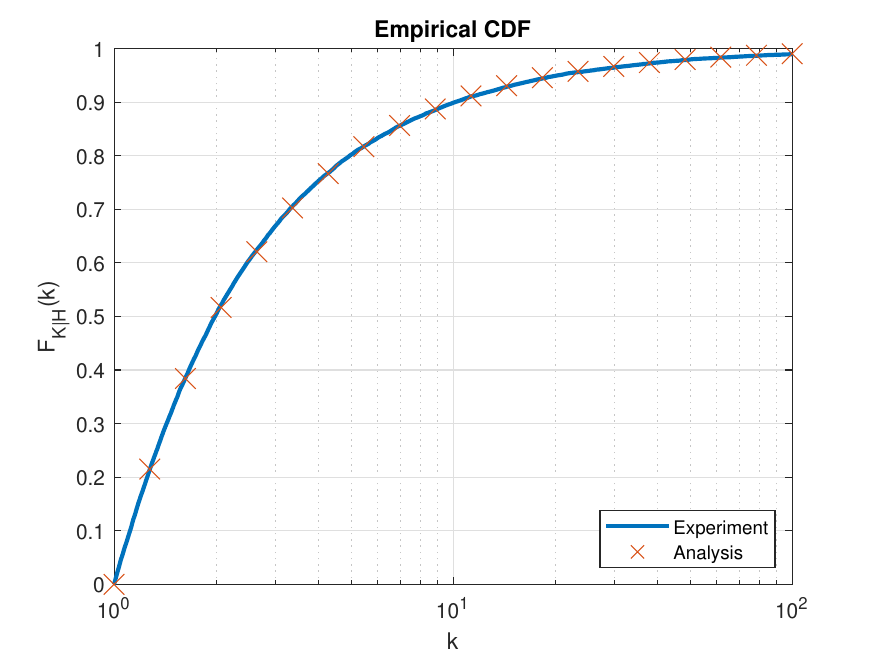}\vspace{-3mm}
	\caption{ \small Example of the cumulative probability distribution of $F_{K|H}(k)$.} %
	\label{fig_t1}\vspace{-7mm}
\end{figure}

	As shown in Theorem~1, it is essential to investigate how the value of $1/ \beta_{ij}$ is determined when measuring a realistic performance of the proposed ephemeral edge computing system. 
	We conduct a statistical analysis to derive the probability corresponding to different values of $1/ \beta_{ij}$. 
	To this end, it is assumed that the data rate and task size are randomly determined. 
	In particular, the size of data $d_i$ is generated by following a uniform distribution  random variable $D\sim U(0,D_{\textrm{max}})$ where $D_\textrm{max}$ is the maximum size of a task.
	We assume that the data rate is denoted by a random variable $R\triangleq \log_2(1+P)$ where $P$ is the received power in a fading channel modeled as an exponential distribution with parameter $\lambda$, i.e., $P\sim \textrm{exp}(\lambda)$. 
	This statistical model is a simplified version of our edge computing system model. 
This statistical modeling facilitates the observation of factors that affect the performance of the proposed algorithm. 
	Then, we derive the probability to have a certain value of $1/ \beta_{ij}$. %
	\begin{theorem}
		If $k \geq \frac{t_{\normalfont  \textrm{tot}} f}{D_{\normalfont \textrm{max}}}$, the probability that $1/ \beta_{ij} \leq k$ is $( F_K(k) - F_K(1) )/ (1-F_K(1))$ where 
		\begin{equation}
		F_{K}(k)  =  \frac{1}{D_{\textrm{max}}} \left[
		\int_{0}^{\frac{t_{\textrm{tot}} f}{k}} 
		\left(1-\exp\left( -\lambda \left(2^{\frac{1}{ 			\frac{t_{\textrm{tot}}}{kx} -\frac{1}{f}	}} -1\right) \right) \right) dx
		+ \left(D_{\textrm{max}} - \frac{t_{\textrm{tot}} f}{k} \right)
		\right].
		\end{equation}
	\end{theorem}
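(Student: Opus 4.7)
The plan is to treat $1/\beta_{ij}$ as a random variable derived from two independent primitives, compute its unconditional CDF $F_K$ by conditioning on the task size, and then recover the stated expression as a conditional probability given that the competitive-ratio bound is active (i.e.\ $1/\beta_{ij} > 1$, since values $\leq 1$ are trivial in Theorem~1).

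First, I would fix notation: let $D \sim U(0, D_{\textrm{max}})$ denote the task size, $P \sim \mathrm{exp}(\lambda)$ the received power, $R = \log_2(1+P)$ the rate, and define $K = 1/\beta_{ij} = t_{\textrm{tot}}/((1/R + 1/f) D)$. The event $\{K \leq k\}$ is equivalent, for $D > 0$, to $1/R \geq t_{\textrm{tot}}/(kD) - 1/f$. I would then split on the sign of the right-hand side: when $D \geq t_{\textrm{tot}} f / k$ the inequality is automatic since $1/R > 0$, and when $D < t_{\textrm{tot}} f / k$ it is equivalent to $R \leq 1/(t_{\textrm{tot}}/(kD) - 1/f)$.

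Second, conditional on $D = x$ with $x < t_{\textrm{tot}} f/k$, I would translate the rate bound into a bound on $P$ using $R \leq r \iff P \leq 2^r - 1$, so that the exponential CDF gives
\begin{equation*}
\Pr(K \leq k \mid D = x) = 1 - \exp\!\left(-\lambda\left(2^{1/(t_{\textrm{tot}}/(kx) - 1/f)} - 1\right)\right),
\end{equation*}
while for $x \geq t_{\textrm{tot}} f/k$ the conditional probability is $1$. Since the hypothesis $k \geq t_{\textrm{tot}} f / D_{\textrm{max}}$ guarantees $t_{\textrm{tot}} f/k \leq D_{\textrm{max}}$, integrating against the uniform density $1/D_{\textrm{max}}$ and splitting the integral at $t_{\textrm{tot}} f/k$ yields precisely the displayed formula for $F_K(k)$.

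Third, I would note that Theorem~1 and the surrounding analysis only produce a nontrivial competitive ratio bound when $\beta_{ij} < 1$, i.e.\ $K > 1$, so the quantity of practical interest is the conditional CDF $\Pr(K \leq k \mid K > 1)$. Applying the elementary identity
\begin{equation*}
\Pr(K \leq k \mid K > 1) = \frac{\Pr(1 < K \leq k)}{\Pr(K > 1)} = \frac{F_K(k) - F_K(1)}{1 - F_K(1)}
\end{equation*}
delivers the claimed expression. The main obstacle I anticipate is the case analysis in the second step: one must carefully justify that the event $\{1/R \geq t_{\textrm{tot}}/(kD) - 1/f\}$ is vacuous precisely when $D \geq t_{\textrm{tot}} f/k$, and check that the hypothesis $k \geq t_{\textrm{tot}} f / D_{\textrm{max}}$ is exactly what makes the integration domain nondegenerate, so that the two pieces of the integral match the two summands inside the brackets of $F_K(k)$. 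The remaining calculation is a direct application of the exponential CDF and the law of total probability.
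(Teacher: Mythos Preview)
Your proposal is correct and follows essentially the same route as the paper: define $K=1/\beta_{ij}$, condition on $D=x$, split the integral at $x=t_{\textrm{tot}}f/k$ (using the hypothesis $k\ge t_{\textrm{tot}}f/D_{\max}$ to ensure this threshold lies in $[0,D_{\max}]$), invoke the exponential CDF for $P$, and finish by conditioning on $K>1$. Your write-up is in fact slightly more careful than the paper's own proof, which contains a sign typo (it writes $\Pr(K>k)$ in the first two displayed lines while actually computing $\Pr(K\le k)$) and does not spell out the case split as explicitly as you do.
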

	\begin{proof}

		We define a random variable $K \triangleq \frac{t_{\textrm{tot}}}{\left(\frac{1}{\log_2 (1+P)}+\frac{1}{f}\right)D}$. 
		Therefore, if $k \geq \frac{t_{\textrm{tot}} f}{D_{\textrm{max}}}$, the cumulative density function of a random variable $K$ is shown as: \vspace{-3mm}
		\begin{eqnarray}
		F_{K} (k) &=& \textrm{Pr}\left(\frac{t_{\textrm{tot}}}{\left(\frac{1}{\log_2 (1+P)}+\frac{1}{f}\right)D}> k \right ) \\
		&=&\int_{0}^{D_{\textrm{max}}} \textrm{Pr}\left(\frac{t_{\textrm{tot}}}{\left(\frac{1}{\log_2 (1+P)}+\frac{1}{f}\right)x}> k | D=x\right) \textrm{Pr}(D=x) dx \\
		&=&  \frac{1}{D_{\textrm{max}}} \left[
		\int_{0}^{\frac{t_{\textrm{tot}} f}{k}} 
		\left(1-\exp\left( -\lambda \left(2^{\frac{1}{ 			\frac{t_{\textrm{tot}}}{kx} -\frac{1}{f}	}} -1\right) \right) \right) dx
		+ \left(D_{\textrm{max}} - \frac{t_{\textrm{tot}} f}{k} \right)		\right].
		\end{eqnarray}
		When $H$ is defined as the event in which $K \geq 1$, 		
		the cumulative density function of a random variable $K$ conditioned on $H$ is $
		F_{K|H} (k) = \frac{ \textrm{Pr}\left( K\leq k \cap K \geq 1 \right)}{ \textrm{Pr}(K\geq 1)} 
		=( F_K(k) - F_K(1) )/ (1-F_K(1)).$
	\end{proof}
	\noindent 	When the tasks are randomly generated and wireless performance dynamically changes, Fig.~\ref{fig_t1} shows an example of the cumulative probability distribution of $F_{K|H}(k)$ when $t_{\textrm{tot}} = 2$, $1/f = 0.5$, and $D_{\textrm{max}}=4$. 
	In Fig.~\ref{fig_t1}, if $k=2$, the probability that $k=1/\beta$ is less than $2$ is around $50$\%. 
	Therefore, the probability that $k$ becomes the empirical value of a competitive ratio in Theorem~2 is:
		$\textrm{Pr}(1/\min_i \beta_{ij} \leq k) = \textrm{Pr}(\max_i 1/\beta_{ij} \leq k) 
	= (F_{K|H}(k))^I. $ 
	Also, from Theorem~2, the derived probability does not change if the total time period is equal to the processing time of the maximum task size, i.e., $t_{\textrm{tot}}= D_{\textrm{max}}/f$. 
 Hence, if an ephemeral edge computing system is designed to use the maximum task size given by $t_{\textrm{tot}}f$, it is possible to expect the empirical value of the competitive ratio when the data rate and task size are randomly determined in a wireless environment. 

\vspace{-5mm}
\section{Simulation Results and Analysis}\label{sec:simulation}
For our simulations, we use a MATLAB simulator in which we consider that the source edge node initially forms a network with $J=10$ neighboring edge nodes uniformly distributed within a circular area of radius between $10~\text{m}$ and $100~\text{m}$. 
For instance, this can be seen as a generalized scenario in which an edge-enabled UAV (or vehicle) forms an edge network with $J$ neighboring nodes in a smart factory (or on a road environment). 
The task size follows a uniform distribution between 50 and 100~Mbits, and the number of tasks is $I=10$.
The power spectral density of the noise is -174~dBm/Hz, the carrier frequency is $2.1$~GHz, and $P_{t}=20$~dBm. 
The computational speed of each neighboring edge node is randomly determined from a uniform distribution between $1\times10^8$ and $5\times10^8$~bits/sec
\textcolor{black}{and we assume $r_j =  B \log_2 \left(1+\frac{g_j P_{t}}{\sigma^2}\right)$}.
The offline optimal solution is calculated by using a mixed-integer linear programming (MILP) solver with the assumption that the size $d_i$ of task $i$, $\forall i\in\cI,$ is completely known. 
\textcolor{black}{All simulations are statistically averaged over 5000 independent runs.}
%All statistical results are averaged over a large number of independent simulation runs. 

%
%
%
%
%
%
%
%
%
%
%
%
%
%
%
%
%
%
%

\begin{figure*}[t]\vspace{-2mm}
	\begin{multicols}{2}\vspace{-9mm}
		%\hspace{-4.5mm}
		\includegraphics[width=0.9\columnwidth]{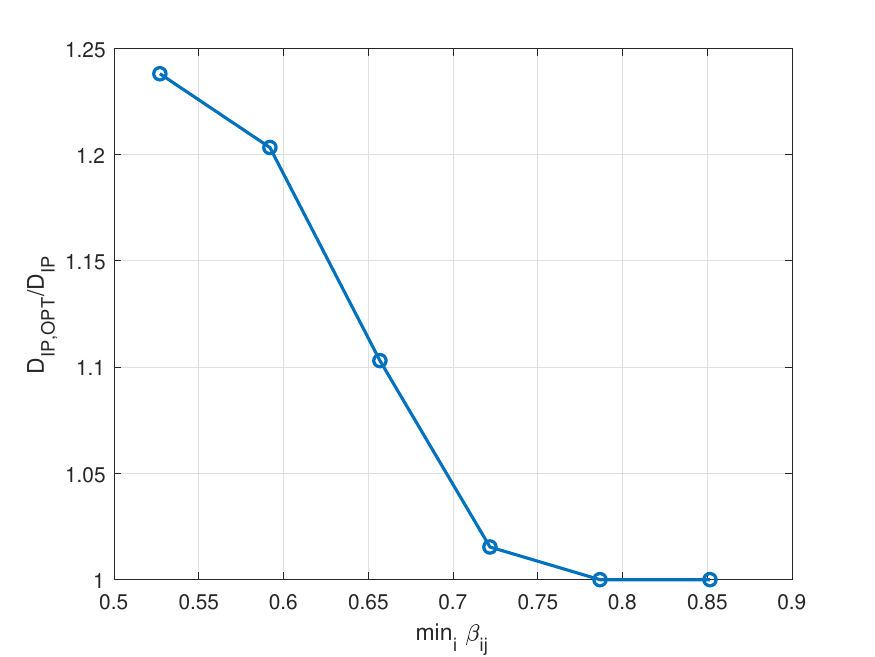}\vspace{-2.5mm}
		\par\caption{\small The empirical competitive ratio in \eqref{defcr} with respect to the different values of $\min_i \beta_{ij}$ when $\alpha=1$.}
\label{fig_beta}
		%\hspace{-4.5mm}
		\includegraphics[width=0.9\columnwidth]{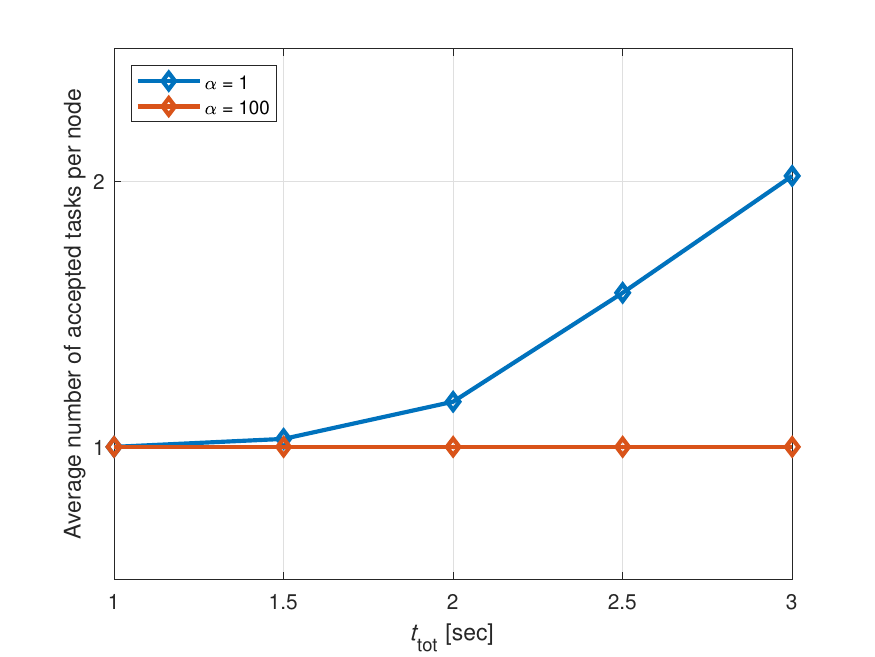}\vspace{-2.5mm}	
		\par\caption{\small Average number of computed tasks for the different total times when $\alpha=1$ and $100$.}
\label{fig1_a}
		\end{multicols}\vspace{-10mm}
\end{figure*}

Fig.~\ref{fig_beta} first shows the empirical ratio between the offline optimal and online solutions, $\textrm{D}_{\textrm{IP,OPT}}/\textrm{D}_{\textrm{IP}}$ for the different values of $\min_i \beta_{ij}$ when $t_{\textrm{tot}}=1$, $\alpha =1$, and $f_j \in[7\times10^7, 10\times10^7]$. 
The numerical results in Fig.~\ref{fig_beta} confirm that the ratio $\textrm{D}_{\textrm{IP,OPT}}/\textrm{D}_{\textrm{IP}}$ decreases as $\min_i \beta_{ij}$ increases as shown in Theorem~1. 
For example, the empirical competitive ratio can be reduced up to $19.2$\% if the smallest $\beta_{ij}$ increases from 0.58 to 0.85. 
Also, in Fig.~\ref{fig_beta}, the cases in which the ratio is one correspond to scenarios in which the proposed algorithm finds the optimal solution. 
For instance, when $\min_i \beta_{ij}$ is greater than $0.79$, Fig.~\ref{fig_beta} shows that the empirical ratio becomes one since $\textrm{D}_{\textrm{IP,OPT}}=\textrm{D}_{\textrm{IP}}$.

Fig.~\ref{fig1_a} shows the average number of accepted tasks per node, i.e., $\sum_i y_{ij}$, for two values of $\alpha= 1$ and $100$. 
In Fig.~\ref{fig1_a}, the number of accepted tasks per node needs to be one due to  constraint \eqref{const5}. 
When $0<z_j<1$, the selection rule in \eqref{selection} can decide to offload a new task to a neighboring node that already accepted a  task. 
In particular, Fig.~\ref{fig1_a} shows that the average number of accepted tasks per node increases with $t_{\textrm{tot}}$ for $\alpha=1$. 
This is due to the fact that the selection rule in \eqref{selection} is affected by two factors, i.e., $(1-z_j)^\alpha$ and $1/\left((1/r_j +1/f_j)\frac{d_i}{t_{\textrm{tot}}}\right)$ where $(1-z_j)^\alpha$ prevents the algorithm from choosing the same node multiple times. 
It is observable that $1/\left((1/r_j +1/f_j)\frac{d_i}{t_{\textrm{tot}}}\right)$ increases as $t_{\textrm{tot}}$ increases. 
Therefore, with a large $t_{\textrm{tot}}$, the selection rule in \eqref{selection} is determined by $1/\left((1/r_j +1/f_j)\frac{d_i}{t_{\textrm{tot}}}\right)$, rather than $(1-z_j)^\alpha$. 
For example, Fig.~\ref{fig1_a} shows the average number of accepted tasks can reach up to 2 when $t_{\textrm{tot}}$ increases from 1 to 3. 
Thus, to avoid offloading more than one task to the same neighboring node, a large $\alpha$ is used in Fig.~\ref{fig1_a}. 
If $\alpha$ is set to a large value, e.g., 100, Fig.~\ref{fig1_a} shows that the selection rule in \eqref{selection} only offloads the tasks to different nodes. 
This is due to the fact that $(1-z_j)^{\alpha}$ is close to zero for a large $\alpha$ when $0<z_j<1$. 
For instance, when $\alpha = 100$, the average number of accepted tasks is 1 for all $t_{\textrm{tot}}$. 
To evaluate Algorithm~\ref{algorithm} in a general task arrival, $\alpha = 100$ is used for the rest of our simulations. %

\begin{figure*}[t]\vspace{-2mm}
	\begin{multicols}{2}\vspace{-9mm}
		%\hspace{-4.5mm}
		\includegraphics[width=0.9\columnwidth]{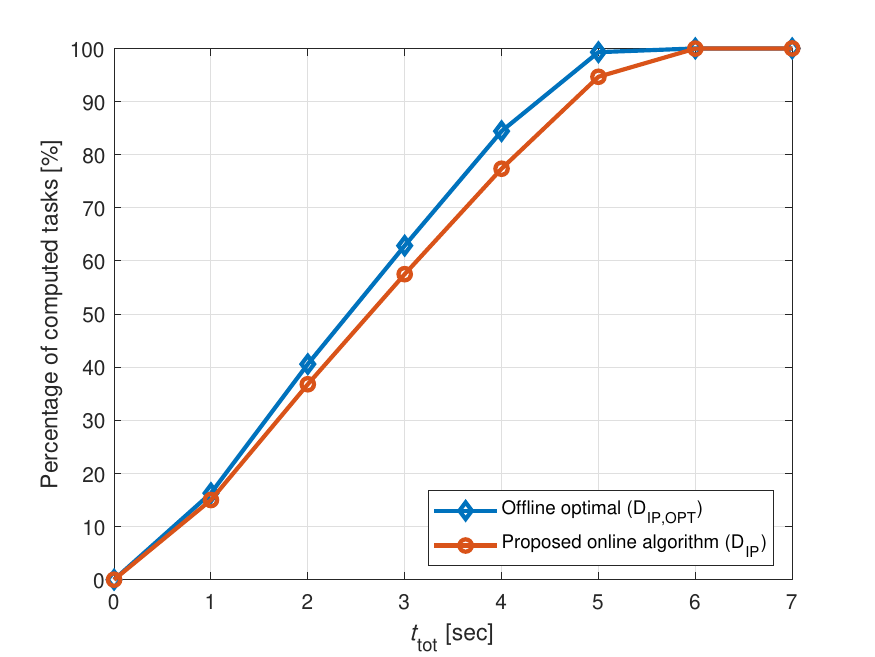}\vspace{-2.5mm}
		\par\caption{\small Comparison between the proposed algorithm's result and the offline optimal solution in terms of percentage of computed tasks for different $t_{\textrm{tot}}$.}
\label{fig1}
		%\hspace{-4.5mm}
		\includegraphics[width=0.9\columnwidth]{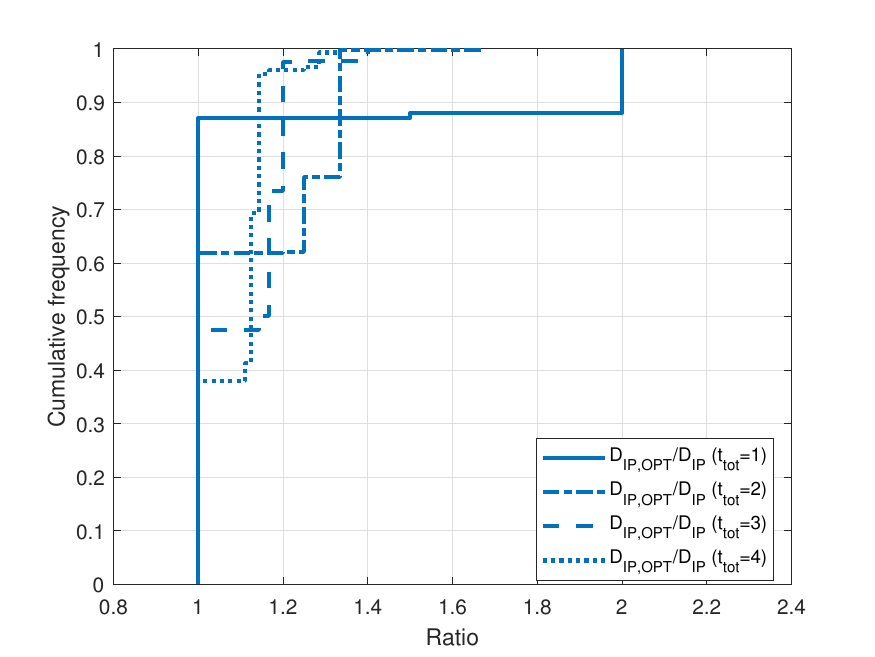}\vspace{-2.5mm}	
		\par\caption{\small The empirical competitive ratios $\textrm{D}_{\textrm{IP,OPT}}/\textrm{D}_{\textrm{IP}}$ when $t_{\rm{tot}}=1, 2,3,4$.}
\label{fig2_c}
		\end{multicols}\vspace{-10mm}
\end{figure*}

Fig.~\ref{fig1} shows the percentage of computed tasks for different values of $t_{\textrm{tot}}$ from 0 to 7~seconds when the total bandwidth is $10$~MHz. 
For comparison, we calculate the offline optimal solution of the dual integer problem, i.e., $\textrm{D}_{\textrm{IP,OPT}}$, by assuming that all task sizes, $d_i$, $\forall i$, are known in advance. %
The offline optimal $\textrm{D}_{\textrm{IP,OPT}}$ shows that the percentage of computed tasks increases with $t_{\textrm{tot}}$ that is a given parameter in problem (D). 
The design goal of our online algorithm is to achieve a performance that is similar to the offline optimal when the task size $d_i$ is revealed one by one. 
To this end, in Fig.~\ref{fig1}, we can observe that the optimal solution and the solution found by Algorithm~\ref{algorithm} are very close for all values of $t_{\textrm{tot}}$. 
This demonstrates the effectiveness of the proposed algorithm that can select properly neighboring edge nodes to offload tasks while maximizing the number of computed tasks. 
For instance, Fig.~\ref{fig1} shows that the maximum gap between the offline optimality and the online solution is only $7.1$\% when $t_{\textrm{tot}}=4$. 
Also, in Fig.~\ref{fig1}, as $t_{\textrm{tot}}$ increases, more tasks can be readily processed within a given time period, and, therefore, the percentage of computed tasks approaches to 100\%. 
In particular, when $t_{\textrm{tot}}=7$, Fig.~\ref{fig1} shows that all computational tasks are processed on the edge computing network in both online and offline cases, respectively.

\textcolor{black}{
Fig. \ref{fig2_c} shows the cumulative frequency of the empirical ratio, $\textrm{D}_{\textrm{IP,OPT}}/\textrm{D}_{\textrm{IP}}$, for both the offline optimal and online solutions when $t_{\rm{tot}}=1, 2,3,4$.
In Fig. \ref{fig2_c}, the ratio $\textrm{D}_{\textrm{IP,OPT}}/\textrm{D}_{\textrm{IP}}$ is shown to have a step-like shape since both $\textrm{D}_{\textrm{IP,OPT}}$ and $\textrm{D}_{\textrm{IP}}$ are integers, and there exists a limited number of possible values for $\textrm{D}_{\textrm{IP,OPT}}/\textrm{D}_{\textrm{IP}}$ for specific settings of  the simulations. 
In Fig. \ref{fig2_c}, the cases in which the ratio is one correspond to scenarios in which the proposed algorithm finds the optimal solution. 
For example, in Fig. \ref{fig2_c}, about $38-88$ \% iterations result in the slope of 1 where the optimal solution is achieved by running the proposed algorithm. 
By the definition of $\gamma$ in (\ref{defcr}), the number of computed tasks with the proposed algorithm is at least $\textrm{D}_{\textrm{IP,OPT}}/\gamma$. 
For instance, in Fig. \ref{fig2_c}, the largest empirical competitive ratio is shown to be $2$ which implies that the number of computed task is at least  $\textrm{D}_{\textrm{IP,OPT}}/2$ when the proposed algorithm is executed with the given simulation parameters.}

\begin{figure*}[t]\vspace{-2mm}
	\begin{multicols}{2}\vspace{-9mm}
		%\hspace{-4.5mm}
		\includegraphics[width=0.9\columnwidth]{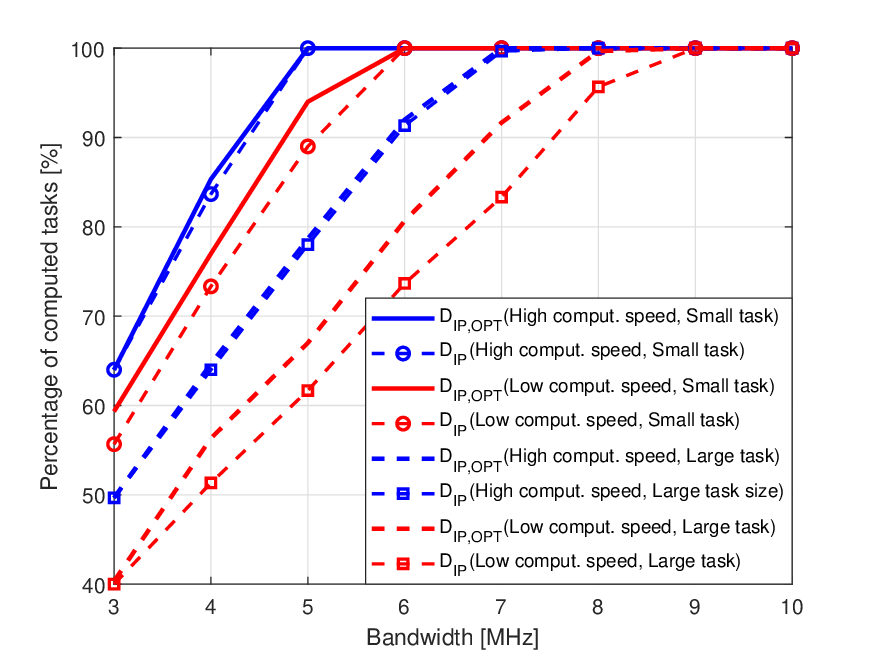}\vspace{-2.5mm}
		\par\caption{\small {Percentage of computed tasks for different computational speeds of neighboring edge nodes and different task sizes when  bandwidth is varying between $3$~and~$7$~MHz.}}
\label{fig3}
		%\hspace{-4.5mm}
		\includegraphics[width=0.9\columnwidth]{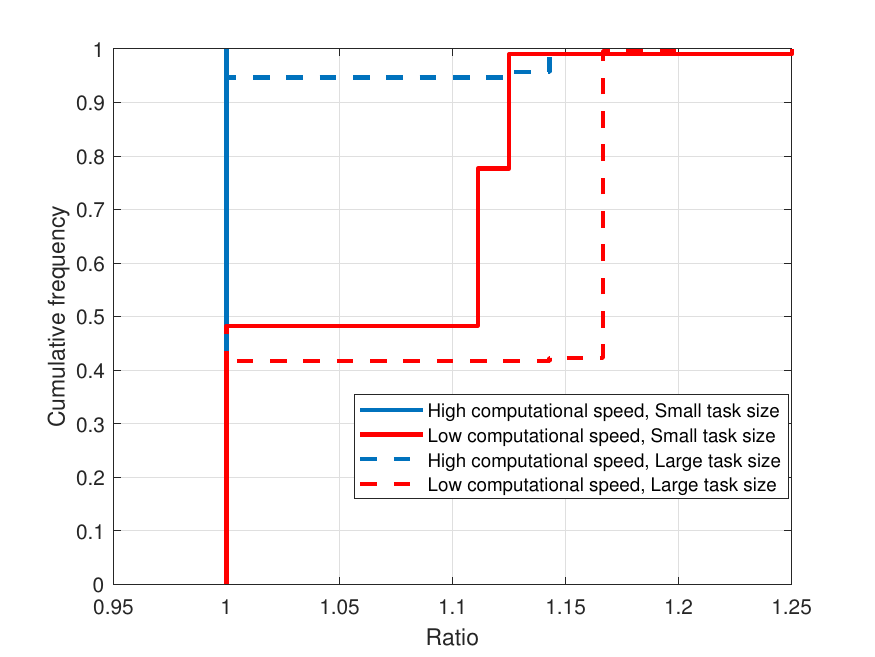}\vspace{-2.5mm}	
		\par\caption{\small The empirical competitive ratios $\textrm{D}_{\textrm{IP,OPT}}/\textrm{D}_{\textrm{IP}}$ for different computational speeds of neighboring edge nodes and different task sizes when  bandwidth is 5 MHz.}
\label{fig.cdfs}
		\end{multicols}\vspace{-10mm}
\end{figure*}

Fig.~\ref{fig3} shows the percentage of computed tasks for two different ranges of  computational speeds of the edge nodes and different task sizes when the bandwidth is changed from 3 to 10~MHz with $t_{\textrm{tot}}=7$ and distance randomly distributed in range from 10~m to 70~m. 
In Fig.~\ref{fig3},  neighboring edge nodes with low computational speeds are represented by $f_j \in[5\times10^7, 8\times10^7]$, whereas  edge nodes with high computational speeds are assumed to have $f_j\in[5\times10^8, 8\times10^8]$. 
Also, we consider two scenarios with small-size tasks $d_i\in[50\times10^6, 70\times10^6]$ and large-size tasks  $d_i\in[70\times10^6, 90\times10^6]$, respectively. 
From Fig.~\ref{fig3}, we can see that the number of computed tasks increases with more bandwidth. 
This is due to the fact that a higher bandwidth can increase the data rate and reduces tasks' transmission latency. Therefore, more tasks can be allocated to neighboring edge nodes. 
For instance, the number of computed tasks can increase about two-fold if the bandwidth changes from 3~MHz to 10~MHz in the case of  edge nodes with  low computational speeds and large-size tasks. 
Also, Fig.~\ref{fig3} shows that using edge nodes with high computational speeds increases the number of computed tasks. 
For example, the percentage of computed tasks increases from $88$\% to $99.5$\% by using edge nodes having high computational speeds when bandwidth is $5$~MHz and the task sizes are small. 
Moreover, Fig.~\ref{fig3} shows that more tasks can be computed as task sizes become smaller; for example, small-sized tasks result in $32.8$\% more computed tasks compared to that of large-sized tasks in the case of $4$~MHz in a high computational speed case.

\textcolor{black}{Fig.~\ref{fig.cdfs} shows the empirical competitive ratio $\textrm{D}_{\textrm{IP,OPT}}/\textrm{D}_{\textrm{IP}}$ for different computational speeds of neighboring edge nodes and different task sizes when bandwidth is 5 MHz.
We can observe that the proposed algorithm in both cases of edge nodes having high computational speeds
almost achieves the optimal performance that can be achieved by the offline optimal solution, i.e., $\textrm{D}_{\textrm{IP,OPT}}$.
However, as shown in Fig.~\ref{fig3}, since $\textrm{D}_{\textrm{IP,OPT}}$ in case of edge nodes having high computational speeds with large-sized tasks
is lower than that in case of edge nodes having low computational speeds with small-sized tasks,
the percentage of computed tasks in case of edge nodes having low computational speeds with small-sized tasks
is higher than that in case edge nodes having high computational speeds with large-sized tasks.}
\textcolor{black}{A higher computational capability can be achieved in a larger edge network than in a small one.
However, establishing a large network will increase the signaling overhead as the number of participating nodes increases. 
Hence, between a large and small edge network, there clearly exists a tradeoff between signaling overhead and computing capability.}

\begin{figure*}[t]\vspace{-2mm}
	\begin{multicols}{2}\vspace{-9mm}
		%\hspace{-4.5mm}
		\includegraphics[width=0.9\columnwidth]{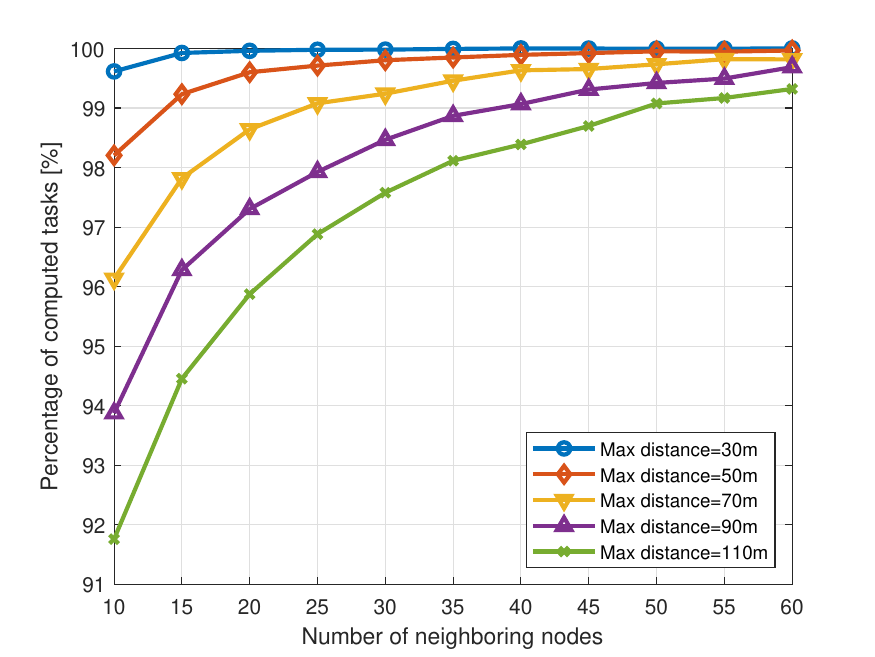}\vspace{-2.5mm}
		\par\caption{\small Percentage of computed tasks for different number of neighboring edge nodes and different maximum communication distances.}
\label{fig4}
		%\hspace{-4.5mm}
		\includegraphics[width=0.9\columnwidth]{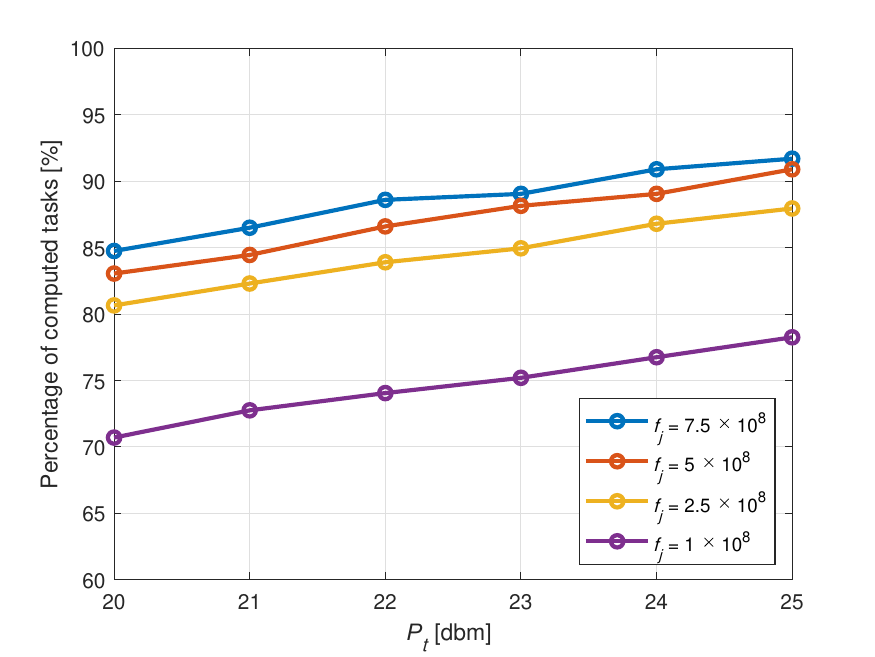}\vspace{-2.5mm}	
		\par\caption{\small Percentage of computed tasks for different transmit powers with respect to  different computing speeds of neighboring nodes.}
\label{fig5}
		\end{multicols}\vspace{-10mm}
\end{figure*}

In Fig.~\ref{fig4}, the percentage of computed tasks is shown for different numbers of neighboring edge nodes ranging from 10 to 60. 
The scenario in Fig.~\ref{fig4} assumes that neighboring edge nodes are randomly distributed within a maximum distance that is varied in range from 30~m to 110~m with $I=10$, $B=5$~MHz, and $t_{\textrm{tot}}=7$. 
Simulations assume that the small-size tasks are in the range of $d_i\in[40\times10^6, 70\times10^6]$. 
Also, the neighboring nodes use low computational speeds in the range of $f_j \in[5\times10^7, 8\times10^7]$. %
In Fig.~\ref{fig4}, it is clear that the number of computed tasks increases with the number of neighboring edge nodes. 
As the  set of neighboring edge nodes becomes larger, the source edge node has a higher probability to allocate its tasks to the neighboring edge nodes having a high data rate and computational speed. 
For instance, the number of computed tasks can increase by about $8.2$\% if the number of edge nodes increases from 10 to 60 when the maximum distance is $110$~m.  
Fig.~\ref{fig4} also shows that the number of computed tasks increases if the maximum communication distance between edge nodes is reduced. 
For example, the percentage of computed tasks increases from $91.8$\% to $99.6$\% by reducing the maximum distance between neighboring edge nodes and the source edge node.

In Fig.~\ref{fig5}, the percentage of computed tasks is shown for different transmit powers from 20~dBm to 25~dBm when the neighboring nodes use identical computing speed that varies from $10^8$ to $7.5\times 10^8$. 
Fig.~\ref{fig5} shows that  the number of computed tasks increases with the transmit power of the source edge node. 
This is due to the fact that the increased data rate reduces the wireless transmission latency, and, therefore, more tasks can be processed within a limited time period. 
For example, the percentage of computed tasks increases by up to $10.7$\% if the transmit power changes from 20~dBm to 25~dBm with $f_j = 10^8$. %
Also,  Fig~\ref{fig5} shows that increasing a computing speed is beneficial to process notably more tasks. 
For instance,  if the computing speed of edge nodes increases from $10^8$ to $7.5\times 10^8$, the edge computing network can process up to about $20\%$ more tasks. %
Thus, Fig~\ref{fig5} shows that reducing the computing latency by using a high computing speed is needed while reducing the transmission latency with a high power.

\vspace{-5mm}
\section{Conclusion}\label{sec:conclusion}

In this paper, we have proposed a new concept of ephemeral edge computing in which the total time period dedicated to edge computing is limited. 
This concept of ephemeral edge computing is applicable to a wide range of scenarios including  Industry 4.0 smart factory, intelligent transportation systems, and smart homes. 
By modeling a generalized scenario of ephemeral edge computing, we have proposed a novel framework to maximize the number of successful computations over an edge computing network within a limited time period. 
This framework allows a source edge node to offload tasks from sensors and distributed tasks to neighboring edge nodes in order to compute the tasks before the source edge node discontinues its current edge computing network. 
When the exact information on the offloaded tasks is unknown to the source edge node, it is challenging  to optimize the decision of which neighboring edge node has to compute each task. 
Therefore, we have formulated an online optimization problem that jointly optimizes the communication and computation latency is formulated and introduced an online greedy algorithm to solve the problem. 
Then, by using the structure of the primal-dual problem formulation, we have derived a feasible competitive ratio  as a function of the task sizes and the data rates of the edge nodes.
Simulation results have shown that the  empirical competitive ratio defined as the ratio between the number of computed tasks achieved by the proposed online algorithm and offline optimal case is at most 2 in a given simulation setting. 
Thus, the simulation results confirm that the proposed online algorithm can efficiently allocate tasks to neighboring edge nodes under uncertainty.
\textcolor{black}{Our future work will include extending our results to additional practical scenarios in which multiple tasks can be allocated to edge nodes under the consideration of the lifetime of an ephemeral edge computing network.}

\vspace{-5mm}

\appendices

\section{Derivation of~(\ref{selection})}\label{appendixA}\vspace{-1mm}

\textcolor{black}{In (\ref{problemD}), problem (D) is formulated to minimize the sum delay.
In this regard, it is beneficial to offload task $i$ from the source node to the neighbor with a high data rate and computing speed
to minimize the sum delay.
Therefore, the decision rule to select edge node $j^*$ needs to be designed to select an edge node with the shortest communication and computing latency to process the task $i$, i.e., $\left(\frac{1}{r_j} + \frac{1}{f_j}\right) d_i$. 
In (\ref{problemP}), problem (P) is formulated to minimize the cost objective function $\sum_{i=1}^I t_{\textrm{tot}} x_i {+} \sum_{j=1}^J z_j {+} u_1$.
Then, the decision rule to select edge node $j^*$ need to be designed to select an edge node with the smallest $z_j$ which is equivalent to
select an edge node with the largest $\left(1-z_j\right)^\alpha$.
Thus, the decision rule to select edge node $j^*$ is obtained by~(\ref{selection}).}

\vspace{-3mm}
\section{Proof of Lemma~\ref{lemma1}}\label{appendix1}\vspace{-1mm}

	We will show that the first constraint is always satisfied for all $i$ when using the updating rule. 
	When allocating task $i'$, $x_i = 0, \forall i\geq i'$ and $u_i = 0, \forall i$ due to the initialization. 
	From the constraints in \eqref{const6} and \eqref{const7}, we have that 
	$\left( {1}/{r_j} + {1}/{f_j} \right) d_i x_i + \left({d_i}/{r_j} \right)  \sum_{i'=i+1}^I x_{i'} + z_j + u_i - u_{i+1} 
	= \left( {1}/{r_j} + {1}/{f_j} \right) d_i (1-z_j) \frac{1}{\left({1}/{r_j} + {1}/{f_j}\right) d_i}  + z_j = 1.$
	Then, we consider the constraints regarding other edge nodes $j \in \cJ\setminus \{j^*\}$ for a given task $\forall i \in \cI$. 
	When $u_i$ is updated, $ u_i - u_{i+1}$ is equal to $\Delta u_i$. 
	Therefore, we can show that edge node $\forall j \in \cJ$ satisfy the constraint \eqref{const4} as follows:
	\begin{eqnarray}
	%&&\left( \frac{1}{r_j} + \frac{1}{f_j} \right) d_i x_i + \left(\frac{d_i}{r_j} \right)  \sum\nolimits_{i'=i+1}^I x_{i'} + z_j + u_i - u_{i+1} \\
	&&\left( \frac{1}{r_j} + \frac{1}{f_j} \right) d_i	 (1-z_{j^*})^\alpha \frac{1}{\left({1}/{r_{j^*}} + {1}/{f_{j^*}}\right) d_i} + z_j + \Delta u_i \nonumber\\
	&&=	 	 \frac{ \left( {1}/{r_j} + {1}/{f_j} \right)}{\left({1}/{r_{j^*}} + {1}/{f_{j^*}}\right) } (1-z_{j^*})^\alpha  + z_j +
	\max_{j' \in \cJ  } \left( 1-  \left(	\frac{\left( {1}/{r_{j'}} + {1}/{f_{j'}} \right) }{\left({1}/{r_{j^*}} + {1}/{f_{j^*}}\right) } (1-z_{j^*})^\alpha  + z_{j'}  \right), 0 \right)\nonumber \\
	&&\geq \frac{ \left( {1}/{r_j} + {1}/{f_j} \right)}{\left({1}/{r_{j^*}} + {1}/{f_{j^*}}\right) } (1-z_{j^*})^\alpha  + z_j +
	\left( 1-  \left(	\frac{\left( {1}/{r_{j}} + {1}/{f_{j}} \right) }{\left({1}/{r_{j^*}} + {1}/{f_{j^*}}\right) } (1-z_{j^*})^\alpha  + z_{j}  \right) \right) = 1. \vspace{-2mm}
	\end{eqnarray}
	Hence, the primal constraints \eqref{const6} and \eqref{const7} are satisfied.

\vspace{-3mm}
\section{Proof of Lemma~\ref{lemma2}}\label{appendix2}\vspace{-1mm}

	For a given $j$, the upper bound of $\sum_{\forall i} y_{ij}$ in \eqref{const3} is derived by using the fact that the proposed algorithm does not update $z_j$ if $\sum_i y_{ij} \geq 1$. 
	In particular, when the task is indexed by $i'$, suppose that the task allocation is not possible for the first time, i.e.,  $y_{ij} = 0$, $\forall i>i'$. 
	Before the last task $i'$ arrives, the value of $\sum_{\forall i} y_{ij}$ is still less than the total budget of edge node $j$. 
	However, after allocating task $i'$ to edge node $j$, $\sum_{\forall i} y_{ij}$ can be greater then one. 
	The violation of the constraint \eqref{const3} makes the value of $z_j$ be greater than 1. 
	Therefore, for any $c>1$, the inequality $z_{j} \geq \frac{1}{c-1} \left(  c^{\sum_{i=1}^{i'} y_{ij} } - 1\right)$ is used to derive the upper bound of $\sum_{\forall i} y_{ij}$. 
	From this relationship, if  $\sum_{i=1}^{i'} y_{ij} \geq 1$, we have $\frac{1}{c-1}\left(  c^{\sum_{i=1}^{i'} y_{ij} } - 1\right) \geq 1$, then $z_j \geq 1$. 
	
	When we define $\beta_{i'j} = \left(\frac{1}{r_{j}} + \frac{1}{f_{j}}\right) \frac{d_{i'}}{t_{\textrm{tot}}}$, the update rule of $z_j$ in \eqref{updatez} is used as following:
\vspace{-2mm}
	\begin{eqnarray}
	z_{j} &=& z_{j} (1+\beta_{i'j}) + \beta_{i'j} \frac{1}{c-1} \label{c3_1}\\
	& \geq & \frac{1}{c-1} \left( c^{ \sum_{i=1}^{i' -1 } y_{ij} } - 1 \right) (1 + \beta_{i'j}) + \beta_{i'j} \frac{1}{c-1} \label{c3_2}\\
	& = & \frac{1}{c-1} \left( c^{ \sum_{i=1}^{i' -1 } y_{ij} } (1 + \beta_{i'j}) - 1 \right) \label{c3_3}\\
	& \stackrel{(a)}{\geq} & \frac{1}{c-1} \left( c^{ \sum_{i=1}^{i' -1 } y_{ij} + \beta_{i'j}   }   - 1 \right), \label{c3_4}			
	\end{eqnarray}
	where $c \triangleq \left(1+\delta \right)^{		1/\delta}$ for a constant $\delta \geq \beta_{i'j} $. %
	From the definition of $c$, $(a)$ holds due to the relationship $1+\beta_{i'j} \geq \left(\left(1+\delta \right)^{		1/\delta}\right)^{ \beta_{i'j} } =\left( \left(1+\delta \right)^{		1/\delta}\right)^{ \beta_{i'j} }$ when $0\leq \beta_{i'j} \leq \delta \leq 1$. 
	Also, the definition of $z_j$ in \eqref{updatez} has an upper bound $z_j \leq \bar{z} \triangleq (1+\delta) + \frac{\delta}{c-1}$, and, therefore, we can rewrite  \eqref{c3_4} as following: 
	$\sum\nolimits_{i=1}^{i'-1} y_{ij} \leq \log_c\left( \bar{z}(c-1) +1\right) - \beta_{i'j}$.
%	\begin{equation}
%	\sum\nolimits_{i=1}^{i'-1} y_{ij} \leq \log_c\left( \bar{z}(c-1) +1\right) - \beta_{i'j}.\label{c3_5}
%	%
%	\end{equation}
Thus, an upper bound of $\sum\nolimits_{i=1}^{i'} y_{ij}$ is derived as: \vspace{-3mm}
	\begin{eqnarray}
	\sum\nolimits_{i=1}^{i'} y_{ij} \stackrel{(a)}{\leq}& \log_c (\bar{z}(c-1)+1) -\beta_{i'j}+1 
	\leq 1+\log_c \frac{(1+\delta)c}{c^{\beta_{i'j}}} \label{c3_7}
	\end{eqnarray}
	where	$(a)$ hold since $\sum_{i=1}^{i'} y_{ij} = \sum_{i=1}^{i' -1 } y_{ij}  +1$ if task $i'$ is allocated. 
	Then, if $\delta =\beta_{i'j}=0$, we can have a lower bound $1+\log_c \frac{(1+\delta)c}{c^{\beta_{i'j}}} = 2$. 

\vspace{-3mm}
\section{Proof of Lemma~\ref{lemma3}}\label{appendix3}\vspace{-1mm}

	By using  the definition of $z_j$ and $x_i$, we derive the change of the objective function of  problem (P), denoted by $\Delta P$. 
	When a task $i$ is allocated to an edge node $j$, $z_j$ and $x_i$ are updated, and, therefore, the objective function of problem (P) increases. 
	In particular, $\Delta P$ increase with $\Delta z_j$ since we want to observe the increment of $z_j$ at current interation while the value of $z_j$ can be updated multiple time. 
	Also, $\Delta P$ increase with $x_i$  since $x_i$ is initially given by 0 and updated only once. Thus, we have $\Delta P = \Delta z_{j} +  t_{\textrm{tot}} x_i + \Delta u_i$ and \vspace{-3mm}
	\begin{eqnarray}
	\Delta P &=& \left(\frac{1}{r_j} + \frac{1}{f_j}\right) \frac{d_i}{t_{\textrm{tot}}} \left( z_j + \frac{1}{c-1} \right) + 
	t_{\textrm{tot}} (1-z_j)^\alpha  \frac{1}{\left({1}/{r_j} + {1}/{f_j}\right) d_i}    + \Delta  u_i\nonumber\\			
	&\stackrel{(a)}{\leq} & \frac{t_{\textrm{tot}}}{\left({1}/{r_j} + {1}/{f_j}\right) d_i}  \left( z_j + \frac{1}{c-1} \right) + 
	(1-z_j)  \frac{t_{\textrm{tot}} }{\left({1}/{r_j} + {1}/{f_j}\right) d_i}  + \Delta  u_i\nonumber\\			
	& \stackrel{}{=} &   \frac{t_{\textrm{tot}}}{\left({1}/{r_j} + {1}/{f_j}\right) d_i} \left(1 + \frac{1}{c-1} \right)  + \Delta  u_i,
	\end{eqnarray}
	where $(a)$  holds due to $\left({1}/{r_j} + {1}/{f_j}\right) \frac{d_i}{t_{\textrm{tot}}} \leq 1$ with $\alpha = 1$. 	
	Next, the objective function of problem (D) is increases by one, and it is denoted by $\Delta D=1$. 
	This is due to the fact that $y_{ij}$ is initially set to zero, and we update $y_{ij}=1$ when task $i$ is assigned to edge node $j$. 
	Hence,  we have $\frac{\Delta P}{\Delta D} \leq \frac{t_{\textrm{tot}}}{\left({1}/{r_j} + {1}/{f_j}\right) d_i} \left(1+ \frac{1}{c-1}\right)  + u_i$.

\vspace{-5mm}
\bibliographystyle{IEEEtran}
\bibliography{UAV2020}

% Generated by IEEEtran.bst, version: 1.14 (2015/08/26)
\begin{thebibliography}{10}
\providecommand{\url}[1]{#1}
\csname url@samestyle\endcsname
\providecommand{\newblock}{\relax}
\providecommand{\bibinfo}[2]{#2}
\providecommand{\BIBentrySTDinterwordspacing}{\spaceskip=0pt\relax}
\providecommand{\BIBentryALTinterwordstretchfactor}{4}
\providecommand{\BIBentryALTinterwordspacing}{\spaceskip=\fontdimen2\font plus
\BIBentryALTinterwordstretchfactor\fontdimen3\font minus
  \fontdimen4\font\relax}
\providecommand{\BIBforeignlanguage}[2]{{%
\expandafter\ifx\csname l@#1\endcsname\relax
\typeout{** WARNING: IEEEtran.bst: No hyphenation pattern has been}%
\typeout{** loaded for the language `#1'. Using the pattern for}%
\typeout{** the default language instead.}%
\else
\language=\csname l@#1\endcsname
\fi
#2}}
\providecommand{\BIBdecl}{\relax}
\BIBdecl

\bibitem{lee2018online}
G.~Lee, W.~Saad, and M.~Bennis, ``Online optimization for {UAV}-assisted
  distributed fog computing in smart factories of {I}ndustry 4.0,'' in
  \emph{Proc. IEEE Global Communications Conference (GLOBECOM)}, Abu Dhabi,
  United Arab Emirates, Dec. 2018, pp. 1--6.

\bibitem{saad2019vision}
W.~Saad, M.~Bennis, and M.~Chen, ``A vision of {6G} wireless systems:
  Applications, trends, technologies, and open research problems,'' \emph{IEEE
  Network}, vol.~34, no.~3, pp. 134--142, 2020.

\bibitem{karimzadeh2022common}
M.~Karimzadeh, W.~Saad, and M.~Debbah, ``Common language for goal-oriented
  semantic communications: A curriculum learning framework,'' in \emph{Proc.
  IEEE International Conference on Communications (ICC)}, Seoul, South Korea,
  May 2022.

\bibitem{chiang2016fog}
M.~Chiang and T.~Zhang, ``Fog and {IoT}: An overview of research
  opportunities,'' \emph{IEEE Internet of Things Journal}, vol.~3, no.~6, pp.
  854--864, Dec. 2016.

\bibitem{chen2019joint}
M.~Chen, Z.~Yang, W.~Saad, C.~Yin, H.~V. Poor, and S.~Cui, ``A joint learning
  and communications framework for federated learning over wireless networks,''
  \emph{IEEE Transactions on Wireless Communications}, vol.~20, no.~1, pp.
  269--283, 2021.

\bibitem{hu2015mobile}
Y.~C. Hu, M.~Patel, D.~Sabella, N.~Sprecher, and V.~Young, ``Mobile edge
  computing—a key technology towards {5G},'' \emph{ETSI white paper},
  vol.~11, no.~11, pp. 1--16, 2015.

\bibitem{wong2015improving}
F.~M.~F. Wong, C.~Joe-Wong, S.~Ha, Z.~Liu, and M.~Chiang, ``Improving user
  {QoE} for residential broadband: Adaptive traffic management at the network
  edge,'' in \emph{Proc. IEEE International Symposium on Quality of Service
  (IWQoS)}, Portland, OR, USA, 2015, pp. 105--114.

\bibitem{Yigitoglu2017foggy}
E.~Yigitoglu, M.~Mohamed, L.~Liu, and H.~Ludwig, ``Foggy: A framework for
  continuous automated {IoT} application deployment in fog computing,'' in
  \emph{Proc. IEEE International Conference on AI \& Mobile Services (AIMS)},
  Honolulu, HI, USA, 2017, pp. 38--45.

\bibitem{yang2019multi}
S.-R. Yang, Y.-J. Tseng, C.-C. Huang, and W.-C. Lin, ``Multi-access edge
  computing enhanced video streaming: Proof-of-concept implementation and
  prediction/{QoE} models,'' \emph{IEEE Transactions on Vehicular Technology},
  vol.~68, no.~2, pp. 1888--1902, 2019.

\bibitem{IBM}
\BIBentryALTinterwordspacing
G.~Darling, ``{IoT} vs. edge computing: What’s the difference?'' IBM, Sep. 9,
  2021. [Online]. Available:
  \url{https://developer.ibm.com/articles/iot-vs-edge-computing/}
\BIBentrySTDinterwordspacing

\bibitem{wang2019qos}
S.~Wang, Y.~Zhao, L.~Huang, J.~Xu, and C.-H. Hsu, ``{QoS} prediction for
  service recommendations in mobile edge computing,'' \emph{Journal of Parallel
  and Distributed Computing}, vol. 127, pp. 134--144, May 2019.

\bibitem{huang2019deep}
L.~Huang, S.~Bi, and Y.-J.~A. Zhang, ``Deep reinforcement learning for online
  computation offloading in wireless powered mobile-edge computing networks,''
  \emph{IEEE Transactions on Mobile Computing}, vol.~19, no.~11, pp.
  2581--2593, 2020.

\bibitem{huang2017fair}
Y.~Huang, X.~Song, F.~Ye, Y.~Yang, and X.~Li, ``Fair caching algorithms for
  peer data sharing in pervasive edge computing environments,'' in \emph{Proc.
  IEEE International Conference on Distributed Computing Systems (ICDCS)},
  Atlanta, GA, USA, 2017, pp. 605--614.

\bibitem{mao2016dynamic}
Y.~Mao, J.~Zhang, and K.~B. Letaief, ``Dynamic computation offloading for
  mobile-edge computing with energy harvesting devices,'' \emph{IEEE Journal on
  Selected Areas in Communications}, vol.~34, no.~12, pp. 3590--3605, 2016.

\bibitem{kuang2019partial}
Z.~Kuang, L.~Li, J.~Gao, L.~Zhao, and A.~Liu, ``Partial offloading scheduling
  and power allocation for mobile edge computing systems,'' \emph{IEEE Internet
  of Things Journal}, vol.~6, no.~4, pp. 6774--6785, 2019.

\bibitem{wang2017computation}
C.~Wang, C.~Liang, F.~R. Yu, Q.~Chen, and L.~Tang, ``Computation offloading and
  resource allocation in wireless cellular networks with mobile edge
  computing,'' \emph{IEEE Transactions on Wireless Communications}, vol.~16,
  no.~8, pp. 4924--4938, 2017.

\bibitem{chen2018virtual}
M.~Chen, W.~Saad, and C.~Yin, ``Virtual reality over wireless networks:
  Quality-of-service model and learning-based resource management,'' \emph{IEEE
  Transactions on Communications}, vol.~66, no.~11, pp. 5621--5635, 2018.

\bibitem{mozaffari2017mobile}
M.~Mozaffari, W.~Saad, M.~Bennis, and M.~Debbah, ``Mobile unmanned aerial
  vehicles {(UAVs)} for energy-efficient internet of things communications,''
  \emph{IEEE Transactions on Wireless Communications}, vol.~16, no.~11, pp.
  7574--7589, 2017.

\bibitem{wang2013energy}
Y.~Wang, W.~Peng, Q.~Dou, and Z.~Gong, ``Energy-constrained ferry route design
  for sparse wireless sensor networks,'' \emph{Journal of Central South
  University}, vol.~20, no.~11, pp. 3142--3149, Nov. 2013.

\bibitem{jeong2017mobile}
S.~Jeong, O.~Simeone, and J.~Kang, ``Mobile cloud computing with a
  {UAV}-mounted cloudlet: Optimal bit allocation for communication and
  computation,'' \emph{IET Communications}, vol.~11, no.~7, pp. 969--974, May
  2017.

\bibitem{zhou2018uav}
F.~Zhou, Y.~Wu, H.~Sun, and Z.~Chu, ``{UAV}-enabled mobile edge computing:
  Offloading optimization and trajectory design,'' in \emph{Proc. IEEE
  International Conference on Communications (ICC)}, Kansas City, MO, USA,
  2018, pp. 1--6.

\bibitem{hu2019uav}
X.~Hu, K.-K. Wong, K.~Yang, and Z.~Zheng, ``{UAV}-assisted relaying and edge
  computing: Scheduling and trajectory optimization,'' \emph{IEEE Transactions
  on Wireless Communications}, vol.~18, no.~10, pp. 4738--4752, 2019.

\bibitem{hu2019joint}
Q.~Hu, Y.~Cai, G.~Yu, Z.~Qin, M.~Zhao, and G.~Y. Li, ``Joint offloading and
  trajectory design for {UAV}-enabled mobile edge computing systems,''
  \emph{IEEE Internet of Things Journal}, vol.~6, no.~2, pp. 1879--1892, 2019.

\bibitem{tun2021energy}
Y.~K. Tun, Y.~M. Park, N.~H. Tran, W.~Saad, S.~R. Pandey, and C.~S. Hong,
  ``Energy-efficient resource management in {UAV}-assisted mobile edge
  computing,'' \emph{IEEE Communications Letters}, vol.~25, no.~1, pp.
  249--253, 2021.

\bibitem{yang2020multiUAV}
L.~Yang, H.~Yao, J.~Wang, C.~Jiang, A.~Benslimane, and Y.~Liu,
  ``Multi-{UAV}-enabled load-balance mobile-edge computing for {IoT}
  networks,'' \emph{IEEE Internet of Things Journal}, vol.~7, no.~8, pp.
  6898--6908, 2020.

\bibitem{zhang2020computation}
J.~Zhang, L.~Zhou, F.~Zhou, B.-C. Seet, H.~Zhang, Z.~Cai, and J.~Wei,
  ``Computation-efficient offloading and trajectory scheduling for multi-{UAV}
  assisted mobile edge computing,'' \emph{IEEE Transactions on Vehicular
  Technology}, vol.~69, no.~2, pp. 2114--2125, 2020.

\bibitem{yang2019energy}
Z.~Yang, C.~Pan, K.~Wang, and M.~Shikh-Bahaei, ``Energy efficient resource
  allocation in {UAV}-enabled mobile edge computing networks,'' \emph{IEEE
  Transactions on Wireless Communications}, vol.~18, no.~9, pp. 4576--4589,
  2019.

\bibitem{huang2017distributed}
X.~Huang, R.~Yu, J.~Kang, and Y.~Zhang, ``Distributed reputation management for
  secure and efficient vehicular edge computing and networks,'' \emph{IEEE
  Access}, vol.~5, pp. 25\,408--25\,420, 2017.

\bibitem{liu2018computation}
Y.~Liu, S.~Wang, J.~Huang, and F.~Yang, ``A computation offloading algorithm
  based on game theory for vehicular edge networks,'' in \emph{Proc. IEEE
  International Conference on Communications (ICC)}, Kansas City, MO, USA, May
  2018, pp. 1--6.

\bibitem{lee2019performance}
G.~Lee, J.~Park, W.~Saad, and M.~Bennis, ``Performance analysis of blockchain
  systems with wireless mobile miners,'' \emph{IEEE Networking Letters},
  vol.~2, no.~3, pp. 111--115, 2020.

\bibitem{kang2018blockchain}
J.~Kang, R.~Yu, X.~Huang, M.~Wu, S.~Maharjan, S.~Xie, and Y.~Zhang,
  ``Blockchain for secure and efficient data sharing in vehicular edge
  computing and networks,'' \emph{IEEE Internet of Things Journal}, vol.~6,
  no.~3, pp. 4660--4670, 2018.

\bibitem{huang2017exploring}
X.~Huang, R.~Yu, J.~Kang, Y.~He, and Y.~Zhang, ``Exploring mobile edge
  computing for {5G}-enabled software defined vehicular networks,'' \emph{IEEE
  Wireless Communications}, vol.~24, no.~6, pp. 55--63, Dec. 2017.

\bibitem{ORAN1}
``{Near-Real-time RAN Intelligent Controller Architecture \& E2 General Aspects
  and Principles},'' O-RAN Alliance, Alfter, Germany, Technical Specification,
  Jul. 2022.

\bibitem{darwish2018fog}
T.~S.~J. Darwish and K.~Abu~Bakar, ``Fog based intelligent transportation big
  data analytics in the internet of vehicles environment: Motivations,
  architecture, challenges, and critical issues,'' \emph{IEEE Access}, vol.~6,
  pp. 15\,679--15\,701, 2018.

\bibitem{ferdowsi2019deep}
A.~Ferdowsi, U.~Challita, and W.~Saad, ``Deep learning for reliable mobile edge
  analytics in intelligent transportation systems: An overview,'' \emph{IEEE
  Vehicular Technology Magazine}, vol.~14, no.~1, pp. 62--70, 2019.

\bibitem{wang2016implementing}
S.~Wang, J.~Wan, D.~Li, and C.~Zhang, ``Implementing smart factory of
  {I}ndustrie 4.0: An outlook,'' \emph{International Journal of Distributed
  Sensor Networks}, vol.~12, no.~1, p. 3159805, Jan. 2016.

\bibitem{zuehlke2010smartfactory}
D.~Zuehlke, ``Smart{F}actory—towards a factory-of-things,'' \emph{Annual
  Reviews in Control}, vol.~34, no.~1, pp. 129--138, Mar. 2010.

\bibitem{majd2016placement}
A.~Majd, G.~Sahebi, M.~Daneshtalab, J.~Plosila, and H.~Tenhunen, ``Placement of
  smart mobile access points in wireless sensor networks and cyber-physical
  systems using fog computing,'' in \emph{Proc. IEEE International Conference
  on Scalable Computing and Communications}, Toulouse, France, Jul. 2016, pp.
  680--689.

\bibitem{jawhar2014framework}
I.~Jawhar, N.~Mohamed, J.~Al-Jaroodi, and S.~Zhang, ``A framework for using
  unmanned aerial vehicles for data collection in linear wireless sensor
  networks,'' \emph{Journal of Intelligent {\&} Robotic Systems}, vol.~74,
  no.~1, pp. 437--453, Apr 2014.

\bibitem{schneider2017augmented}
M.~Schneider, J.~Rambach, and D.~Stricker, ``Augmented reality based on edge
  computing using the example of remote live support,'' in \emph{Proc. IEEE
  International Conference on Industrial Technology (ICIT)}, Toronto, Canada,
  2017, pp. 1277--1282.

\bibitem{cooper2017comparative}
C.~Cooper, D.~Franklin, M.~Ros, F.~Safaei, and M.~Abolhasan, ``A comparative
  survey of {VANET} clustering techniques,'' \emph{IEEE Communications Surveys
  Tutorials}, vol.~19, no.~1, pp. 657--681, 2017.

\bibitem{buchbinder2009design}
N.~Buchbinder and J.~S. Naor, ``The design of competitive online algorithms via
  a primal–dual approach,'' \emph{Foundations and Trends in Theoretical
  Computer Science}, vol.~3, no. 2-3, pp. 93--263, May 2009.

\end{thebibliography}

\end{document}